\theoremstyle{definition}
\newtheorem{theorem}{Theorem}
\newtheorem{corollary}{Corollary}
\newtheorem*{definition*}{Definition}
\newtheorem{lemma}{Lemma}
\newtheorem*{fact}{Fact}
\def\ket#1{| #1 \rangle}
\def\app#1#2{%
  \mathrel{%
    \setbox0=\hbox{$#1\sim$}%
    \setbox2=\hbox{%
      \rlap{\hbox{$#1\propto$}}%
      \lower1.1\ht0\box0%
    }%
    \raise0.25\ht2\box2%
  }%
}
\DeclarePairedDelimiter{\norm}{\lVert}{\rVert}
\begin{document}

\title{Pulling the Boundary into the Bulk}

\author{Yasunori Nomura}
\affiliation{Berkeley Center for Theoretical Physics, Department of Physics, 
  University of California, Berkeley, CA 94720, USA}
\affiliation{Theoretical Physics Group, Lawrence Berkeley National Laboratory, 
  Berkeley, CA 94720, USA}
\affiliation{Kavli Institute for the Physics and Mathematics 
 of the Universe (WPI), The University of Tokyo Institutes for Advanced Study, 
 Kashiwa 277-8583, Japan}
\author{Pratik Rath}
\affiliation{Berkeley Center for Theoretical Physics, Department of Physics, 
  University of California, Berkeley, CA 94720, USA}
\affiliation{Theoretical Physics Group, Lawrence Berkeley National Laboratory, 
  Berkeley, CA 94720, USA}
\author{Nico Salzetta}
\affiliation{Berkeley Center for Theoretical Physics, Department of Physics, 
  University of California, Berkeley, CA 94720, USA}
\affiliation{Theoretical Physics Group, Lawrence Berkeley National Laboratory, 
  Berkeley, CA 94720, USA}

\begin{abstract}
Motivated by the ability to consistently apply the Ryu-Takayanagi 
prescription for general convex surfaces and the relationship between 
entanglement and geometry in tensor networks, we introduce a novel, 
covariant bulk object---the holographic slice.  The holographic slice 
is found by considering the continual removal of short range information 
in a boundary state.  It thus provides a natural interpretation as the 
bulk dual of a series of coarse-grained holographic states.  The slice 
possesses many desirable properties that provide consistency checks for 
its boundary interpretation.  These include monotonicity of both area 
and entanglement entropy, uniqueness, and the inability to probe beyond 
late-time black hole horizons.  Additionally, the holographic slice 
illuminates physics behind entanglement shadows, as minimal area extremal 
surfaces anchored to a coarse-grained boundary may probe entanglement 
shadows.  This lets the slice flow through shadows.  To aid in developing 
intuition for these slices, many explicit examples of holographic slices 
are investigated.  Finally, the relationship to tensor networks and 
renormalization (particularly in AdS/CFT) is discussed.
\end{abstract}

\maketitle

\makeatletter
\def\l@subsection#1#2{}
\def\l@subsubsection#1#2{}
\makeatother
\tableofcontents

\section{Introduction}
\label{sec:intro}

The holographic duality between asymptotically \mbox{Anti-de~Sitter} (AdS) 
spacetimes in $d+1$ dimensions and conformal field theories (CFTs) in 
$d$ dimensions is perhaps the closest realization of a complete theory 
of quantum gravity~\cite{Maldacena:1997re,Gubser:1998bc,Witten:1998qj}. 
One of the most intriguing results stemming from this correspondence 
is the renowned Ryu-Takayanagi formula relating entanglement entropy 
in time independent CFTs to the area of minimal surfaces in the bulk 
spacetime~\cite{Ryu:2006bv}.  The covariant extension of this formula 
to include time dependent cases was obtained by Hubeny-Rangamani-Takayanagi 
and uses extremal bulk surfaces (henceforth referred to as the 
HRRT prescription)~\cite{Hubeny:2007xt}.  Remarkably, by including 
quantum corrections to this formula, one obtains entanglement wedge 
reconstruction~\cite{Jafferis:2015del,Dong:2016eik}.  These investigations 
have shed light on the deep connection between entanglement in the 
boundary and emergent gravitational physics in the bulk.

However, there is reason to believe that these results extend beyond the 
scope of AdS/CFT.  The Bekenstein-Hawking formula~\cite{Bekenstein:1973ur,%
Hawking:1974sw} and the covariant entropy bound~\cite{Bousso:1999xy} 
provide us with holographic bounds on entropy in general spacetimes. 
These suggest that gravitational physics may inherently be 
holographic~\cite{'tHooft:1993gx,Susskind:1994vu}.  Furthermore, 
the areas of extremal surfaces anchored to any convex boundaries 
satisfy all known entropic inequalities~\cite{Miyaji:2015yva,Bao:2015bfa,%
Sanches:2016sxy,Rota:2017ubr}.  In isometric tensor networks, calculating 
the entanglement entropy of a subregion of boundary sites reduces 
to finding the minimum cut across the network~\cite{Swingle:2012wq}. 
A version of entanglement wedge reconstruction also holds in perfect 
and random tensor networks~\cite{Pastawski:2015qua,Hayden:2016cfa}. 
This evidence seems to indicate that the HRRT prescription may in fact 
generalize to spacetimes outside of AdS.

It is with this perspective that we have pursued investigations of 
quantum gravity beyond AdS/CFT.  We postulate that the HRRT prescription 
(with quantum corrections~\cite{Faulkner:2013ana,Engelhardt:2014gca} to 
allow for entanglement wedge reconstruction) applies to general convex 
boundaries.  In particular, we assume the existence of a quantum state 
that ``lives'' on the convex boundary and encodes the bulk geometry of 
the interior.  Our previous work~\cite{Nomura:2016ikr,Nomura:2017npr,%
Nomura:2017fyh} has focused primarily on investigating this assumption 
applied to holographic screens~\cite{Bousso:1999cb}, but holographic 
screens are only special in the sense that they are the largest surfaces 
in which we can apply the HRRT prescription in general spacetimes.  In 
this paper, we have relaxed this condition and look at general convex 
surfaces---in particular this allows us to consider asymptotically AdS 
spacetimes.

We emphasize that the postulate we adopt here is falsifiable.  At any 
point in the analysis, had the geometric properties of general relativity 
prohibited a consistent boundary interpretation, the program would 
have failed.  Through the present, however, no such roadblock has 
presented itself.  In fact, we can view the self-consistency of this 
work as further evidence that the relationship between entanglement 
and geometry prevails in general spacetimes.

The present framework allows us to consider a nested family of convex 
surfaces each of which contains less bulk information than the previous. 
Taking a natural continuum version of these convex surfaces yields 
a surface which we dub the {\it holographic slice}.  As has proved 
historically useful, studying covariantly defined geometric objects 
yields insights into holographic theories, and the holographic slice 
is such an object.  In particular, the slice allows us to visualize 
the coarse-graining of holographic states and provides us with a method 
to categorize bulk regions as being ``more IR'' than others.

The construction of the slice is purely perturbative, and thus does 
not allow us to analyze complex quantum gravitational states formed 
as superpositions of many geometries.  It is inherently tied to one 
background geometry, but this is no more restrictive than the HRRT 
prescription itself.  For a simply connected boundary, the slice seems 
to sweep the maximal bulk region that can be perturbatively reconstructed, 
i.e.\ it pulls the boundary in through shadows all the way to black 
hole horizons, and contracts no further.  It was the study of extremal 
surfaces in maximally entangled pure states that initiated this work, 
where it was realized that maximally entangled states were fixed points 
of flow in the direction of small HRRT surfaces~\cite{Nomura:2017fyh}.%
\footnote{By maximally entangled, we mean that the von~Neumann entropy 
 of any subregion saturates the Page curve~\cite{Page:1993df}.  This 
 was referred to as maximally entropic in Ref.~\cite{Nomura:2017fyh}.}

\subsection*{Overview}

In Section~\ref{sec:motivation}, we offer the intuitive motivation for 
the construction of the holographic slice.  This reveals the connection 
to coarse-graining and provides a basic definition for the object. 
In Section~\ref{sec:holo_slice}, we geometrically define the 
object rigorously and then illustrate some of its most important 
properties.  These properties must necessarily be satisfied for 
a consistent interpretation as coarse-graining of a boundary state. 
Section~\ref{sec:example} goes through multiple explicit examples of 
the slice in different spacetimes.

After introducing the holographic slice as a geometric object, we are poised 
to analyze its boundary interpretation in Section~\ref{sec:interpret}. 
Here we delve into its relationship to coarse-graining and emphasize that 
the slice encodes a sequence of \mbox{codimension-0} bulk regions, not 
merely the \mbox{codimension-2} bulk convex surfaces.  We also describe 
the relationship to tensor networks, particularly continuous tensor 
networks~\cite{Haegeman:2011uy}.  Because the holographic slice is 
constructed from one boundary time slice, it can be used as a novel 
gauge fixing of the bulk---this is discussed in the final subsection 
of Section~\ref{sec:interpret}.  Since the slice grants us a way to 
uniquely pull in the boundary, it is natural to consider its connection 
to renormalization.  This is explored in Section~\ref{sec:renorm}.  We 
conclude with discussing the slice's place in the wider view of quantum 
gravity in Section~\ref{sec:disc}.  The appendices contain proofs of 
various geometric statements made in the body of the text.

\subsection*{Preliminaries}

This paper will work in the framework of holography for general 
spacetimes proposed in Ref.~\cite{Nomura:2016ikr}.  We will highlight 
the applications to AdS/CFT but use language from generalized holography. 
In particular, the term ``boundary'' will refer to the holographic 
screen, which reduces to the conformal boundary of AdS.  Additionally, 
holographic screens have a unique time foliation into \mbox{codimension-2} 
surfaces called leaves~\cite{Bousso:2015mqa}.  This uniqueness is lost 
in the case of AdS because of the asymptotic symmetries, but to remain 
consistent within the generalized framework we must choose a particular 
time foliation of the boundary of AdS~\cite{Nomura:2017npr}.  We will 
then refer to a time slice of this foliation as a leaf.

For a subregion, $A$, of a leaf, we will denote its HRRT surface as 
$\gamma(A)$ and its entanglement wedge as $\text{EW}(A)$.  $\text{EW}(A)$ 
is defined as the domain of dependence of any closed, compact, achronal 
set with boundary $A \cup \gamma(A)$.%
\footnote{In standard AdS/CFT, reflective boundary conditions are imposed 
 at the conformal boundary.  This extends the domain of dependence of 
 $A \cup \gamma(A)$ to include the boundary domain of dependence of $A$. 
 However, for holographic screens there are no such impositions on the 
 screen (they can even be spacelike), and thus $\text{EW}(A)$ generally 
 does not include any portion of the screen other than $A$ itself. 
 In particular, there is no generalization of causal wedges to 
 holographic screens.}
Throughout this paper, we will work at the lowest order in bulk Newton's 
constant.  In particular, we will only consider extremal surfaces 
found by extremizing the area, not the generalized entropy.  We 
expect that by making appropriate modifications, along the lines of 
Refs.~\cite{Faulkner:2013ana,Engelhardt:2014gca,Bousso:2015eda}, our 
results can be extended to higher orders.

For an achronal \mbox{codimension-2} surface $\omega$, we denote the 
domain of dependence of any achronal set with boundary $\omega$ as 
$D(\omega)$.  The definition of convexity for a \mbox{codimension-2} 
surface is given in Appendix~\ref{app:convexity}.
Wherever $G_{\rm N} = l_{\rm P}^{d-1}$ appears, it represents 
Newton's constant in the bulk spacetime.

\section{Motivation}
\label{sec:motivation}

Amongst other things, the concept of subregion duality in holographic theories 
allows us to address questions regarding where bulk information is stored in 
the boundary theory~\cite{Hamilton:2006az,Hamilton:2006fh,Heemskerk:2012mn,%
Bousso:2012sj,Czech:2012bh,Wall:2012uf,Headrick:2014cta,Jafferis:2015del,%
Dong:2016eik}.  This line of inquiry has provided us with the intuition 
that bulk geometric information is encoded redundantly in the boundary 
theory.  In particular, a bulk local operator can be represented in 
multiple different regions of the boundary, a special case being the 
whole boundary space.  However, despite this seemingly democratic 
distribution of bulk information throughout boundary degrees of freedom, 
lack of access to a boundary region necessarily prohibits the reconstruction 
of a corresponding bulk region.  Namely, if one removes a subregion $A$ 
from a leaf $\sigma$, the maximum possible bulk region reconstructed from 
the remaining region, $\overline{A}$, will be the entanglement wedge of 
$\overline{A}$, $\text{EW}(\overline{A})$.  This implies that indispensable 
information of the region $\text{EW}(A)$ is stored in $A$.

Suppose one were to coarse-grain over all boundary subregions of balls 
of radius $\delta$.  From the logic above, the bulk region whose information 
can remain is given by
\begin{equation}
  R(B_\delta) = \bigcap\limits_{p \in \sigma} 
    \text{EW}\bigl(\overline{B_\delta(p)}\bigr),
\label{eq:bulk-remain}
\end{equation}
where $B_\delta(p)$ is a ball of radius $\delta$ centered at point $p$ 
on the boundary leaf $\sigma$.  Because the intersection of domains of 
dependence is itself a domain of dependence,%
\footnote{We could not find a proof of this statement, so we have 
 included one in Appendix~\ref{app:domains}.}%
$R(B_\delta)$ is a domain of dependence of some achronal set, and thus 
has a unique boundary, $\sigma(B_\delta)$.

Motivated by ideas of holographic renormalization in 
AdS/CFT~\cite{Akhmedov:1998vf,Alvarez:1998wr,Balasubramanian:1999jd,%
Skenderis:1999mm,deBoer:1999tgo}, surface state 
correspondence~\cite{Miyaji:2015yva}, and previous work on 
holographic screens~\cite{Sanches:2016sxy,Nomura:2016ikr,Nomura:2017npr,%
Nomura:2017fyh}, we conjecture that there exists a holographic 
state living on $\sigma(B_\delta)$ which encodes aspects of the bulk 
geometry to its interior.  A check of this proposal is that the HRRT 
prescription can be consistently applied, in the sense that the areas 
of these HRRT surfaces obey the known holographic entropy inequalities. 
Given this consistency check, we conjecture that ``coarse-grained 
subregion duality'' holds---namely that entanglement wedge reconstruction 
holds on this new leaf.

Now suppose one wants to coarse-grain over some scale on this new leaf, 
$\sigma(B_\delta)$.  This will produce a new leaf even deeper in the 
bulk. Repeating this process will in turn produce a series of new leaves, 
henceforth called renormalized leaves.  Sending the coarse-graining scale 
at each step to zero in a consistent manner will produce a continuous 
family of renormalized leaves that sweep out a smooth surface through 
the bulk, $\Upsilon$.  The manner in which $\Upsilon$ is constructed 
naturally reveals its relationship to holographic coarse-graining. 
This prompts us to assert that the continuous coarse-graining of a 
holographic state pulls the boundary slice in along the slice $\Upsilon$.

\section{Holographic Slice}
\label{sec:holo_slice}

The geometric object $\Upsilon$ is what we will refer to as the holographic 
slice.  In this section, we give a more rigorous definition of holographic 
slices and highlight some of the salient properties of them.

\subsection{Definition}
\label{subsec:def}

Consider a closed \mbox{codimension-2} achronal surface $\sigma$ living 
in a $(d+1)$-dimensional spacetime $M$.  Denote the two future-directed 
null orthogonal directions as $k$ and $l$.  Suppose the null expansions 
along these directions satisfy $\theta_k \le 0$ and $\theta_l > 0$.%
\footnote{Appropriate modifications can be made to extend to surfaces 
 where $\theta_k \ge 0$ and $\theta_l < 0$.}
For concreteness, one could imagine $\sigma$ to be a leaf of a past 
holographic screen ($\theta_k = 0$, $\theta_l > 0$) or a time slice of 
the (regularized) boundary of AdS.  Borrowing this language, we will 
call $\sigma$ a leaf.  From Ref.~\cite{Sanches:2016sxy}, we know that 
the boundary of the domain of dependence of $\sigma$, $D(\sigma)$, is 
an extremal surface barrier for HRRT surfaces anchored to $\sigma$. 
In addition, the boundary of an entanglement wedge of a subregion 
$\Gamma$ on $\sigma$ serves as an extremal surface barrier for all 
extremal surfaces anchored within $\text{EW}(\Gamma)$.

Now, on $\sigma$, consider a family of open, \mbox{codimension-0} (within 
the leaf) smooth subregions, with an injective mapping from points on the 
leaf, $p$, to subregions, $C(p)$, with the constraint that $p \in C(p)$ 
and that $C(p)$ varies continuously with $p$.  For example, one may take 
$C(p)$ to be open balls of radius $\delta$ centered at $p$, $B_\delta(p)$. 
Now, let
\begin{equation}
  R(C) = \bigcap\limits_{p \in \sigma} \text{EW}\bigl(\overline{C(p)}\bigr),
\label{eq:domain1}
\end{equation}
where $\overline{C(p)}$ is the complement of $C(p)$ in $\sigma$.  From 
Appendix~\ref{app:domains}, we know that $R(C)$ itself is a domain of 
dependence of some achronal sets, all of which share a unique boundary, 
$\sigma_C^1$, called a renormalized leaf; see Fig.~\ref{fig:RG}.
\begin{figure}[t]
  \includegraphics[scale=0.25]{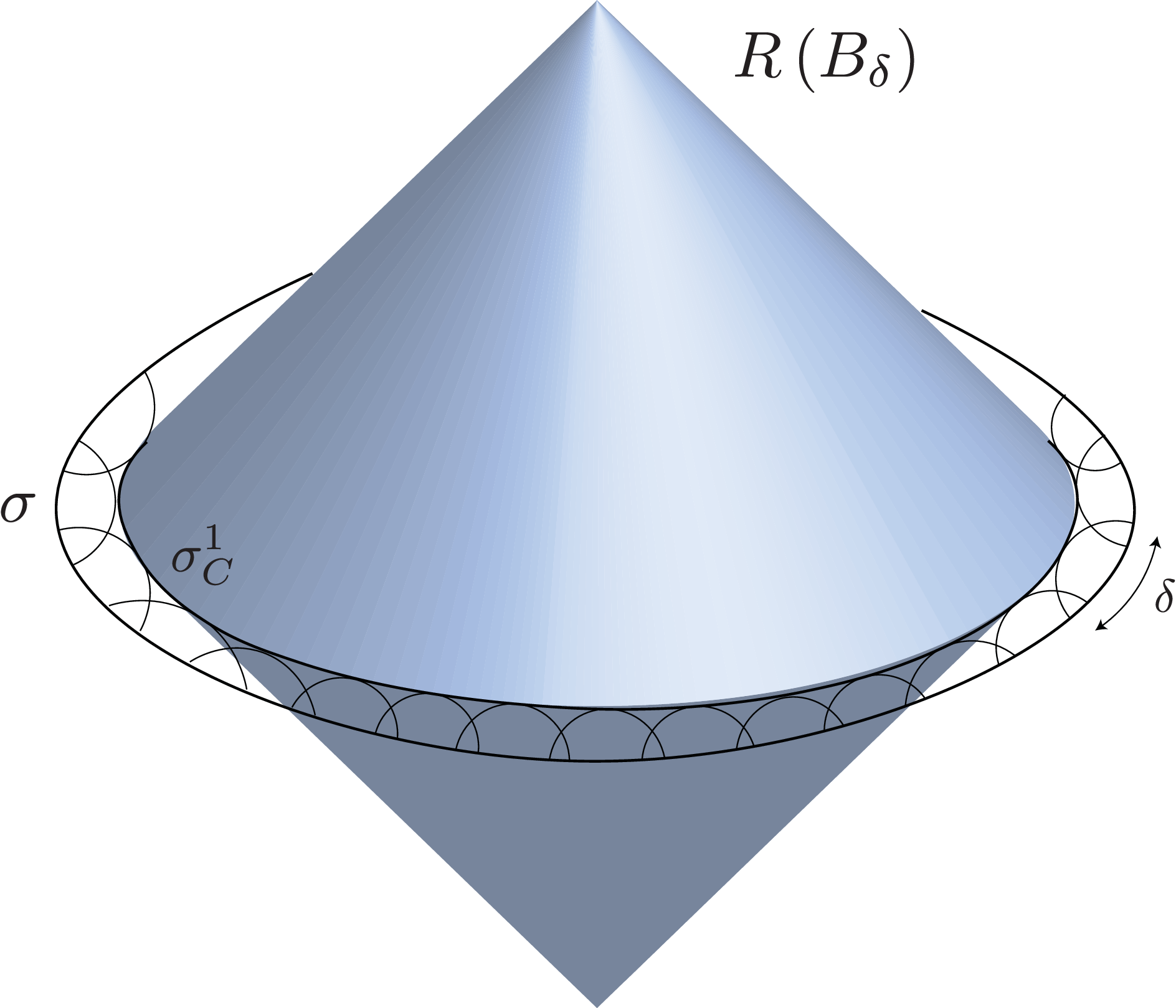} 
\caption{$R(B_{\delta})$ is the entanglement wedge associated with the 
 new leaf $\sigma_C^1$, where we have taken $C(p) = B_\delta(p)$.  It 
 is formed by intersecting the entanglement wedges associated with the 
 complements of spherical subregions of size $\delta$ on the original 
 leaf $\sigma$.}
\label{fig:RG}
\end{figure}

Provided the characteristic scale of $\text{EW}(C(p))$ is sufficiently 
smaller than the extrinsic curvature scale of $\sigma$, we can identify 
all points on $\sigma_C^1$ with those on $\sigma$.  At each point on 
$\sigma$, consider the plane generated by $k$ and $l$.  This will intersect 
$\sigma_C^1$ at one point so long as adjacent planes do not intersect at a 
caustic before hitting $\sigma_C^1$, which is guaranteed if we take $C(p)$ 
to be sufficiently small.  We now have a natural identification of the points 
on $\sigma$ with points on $\sigma_C^1$.  In particular, we can identify the 
length scale $\delta$ on $\sigma$ to the appropriate scale on $\sigma_C^1$. 
Moreover, since $\sigma_C^1 \subseteq \text{EW}(\overline{C(p)})\,\, 
\forall p$ and $\text{EW}(\overline{C(p)})$ serves as an extremal surface 
barrier for all surfaces anchored within $\text{EW}(\overline{C(p)})$, 
all extremal surfaces anchored to $\sigma_C^1$ are contained within 
$R(C) = D(\sigma_C^1)$.  If we interpret the area of these extremal 
surfaces as entanglement entropies for the associated subregions, convexity 
of $R(C)$ and the null energy condition ensure that all known holographic 
entanglement inequalities are satisfied.  This allows us to interpret 
these extremal surfaces as HRRT surfaces.

Utilizing coarse-grained subregion duality, we can repeat the construction 
above but with new subregions, $C^1(p)$, on the renormalized leaf, 
$\sigma_C^1$.  This yields a new leaf $\sigma_C^2$.  We can associate 
points on $\sigma_C^2$ with those on $\sigma_C^1$, and hence on $\sigma$, 
as was done above.  All we require of the new subregions is that the 
size scale of $C^i(p)$ match with that of $C(p)$ under the natural 
identification described previously.  This procedure can be repeated 
until stringy effects become important, i.e.\ when the size of the 
renormalized leaf is $O(l_{\rm s})$.

In the limit of sending the size of $C(p)$ to $0$ for all $p$, the 
collection of all renormalized leaves in $M$, $\Upsilon = \{ \sigma_{C}^i \}$ 
sweeps out a continuous surface.  This is a holographic slice.  Note that 
we can take the $G_{\rm N} \rightarrow 0$ limit in discussing classical 
spacetime, so we may take $C(p) \rightarrow 0$ thus making $\Upsilon$ 
continuous in this limit.  We can then label the renormalized leaves of 
$\Upsilon$ by some continuous parameter $\lambda$, corresponding to the 
depth of the renormalized leaf, i.e.\ $\sigma(\lambda)$ is some $\sigma_C^i$ 
and $\sigma(0) = \sigma$.  Below we take $\lambda$ to decrease as $i$ 
increases, so that $\lambda \leq 0$.

\subsection{Properties}
\label{subsec:prop}

\subsubsection{Uniqueness}
\label{subsubsec:unique}

$\Upsilon$ is dependent on an extraordinary number of degrees of freedom, 
namely the shape $C(p)$ at each point on $\sigma$.  Despite this freedom, 
we find that $\Upsilon$ is unique provided some mild assumptions hold.

In particular, imposing homogeneity and isotropy on $C(p)$ (and each 
subsequent $C^i(p)$) yields a unique holographic slice.  For example, 
one can restrict themselves to the case where $C^i(p)$ are composed of 
the same shape and with random orientations.  These all reduce to the 
slice formed by considering balls of constant radius for $C(p)$ and 
mapping these balls to the subsequent renormalized leaves.  We will 
focus on this preferred slice for the remainder of the paper.

A full discussion of uniqueness is provided in Appendix~\ref{app:uniqueness}. 
However, one of the primary results is that the vector $s$, which is 
tangent to $\Upsilon$ and radially evolves the leaf inward is given by
\begin{equation}
  s = \frac{1}{2} (\theta_k l + \theta_l k).
\label{eq:s-vec}
\end{equation}
This tells us that for a (non-renormalized) leaf of a holographic screen, 
$s \propto k$ and $\theta_s = 0$.  In these situations, the holographic 
slice initially extends in the null direction and the leaf area remains 
constant.

In fact, the $s$ vector coincides with the Lorentzian generalization of 
the mean curvature vector of the leaf~\cite{Andersson:2007gy,Mars:2014wbd}. 
This preferred holographic slice is then realized as the mean curvature 
flow of the initial leaf.

\subsubsection{Monotonicity of Renormalized Leaf Area}

One of the most important features of the holographic slice is that the area 
of the renormalized leaves decreases monotonically as $\lambda$ decreases. 
This is crucial to the interpretation as coarse-graining.  This property can 
be shown in a manner similar to Ref.~\cite{Sanches:2016pga}, but only after 
showing that $\theta_k \leq 0$ and $\theta_l \geq 0$ for each renormalized 
leaf.  This is proved in Appendix~\ref{app:convexity}.

Armed with this knowledge, consider a point $p$ on $\sigma$ and the $s$ 
vector (defined in the previous section) orthogonal to the leaf $\sigma$ 
at $p$.  The integral curves of $s$ passing through $p$ provide a mapping 
of $p$ to a unique point on each $\sigma(\lambda)$.  Now consider an 
infinitesimal area element $\delta A$ around $p$.  The rate at which 
this area changes as one flows along $s$ is measured by the expansion
\begin{equation}
  \theta_s = \theta_k \theta_l \leq 0,
\label{eq:theta_s}
\end{equation}
as found in Appendix~\ref{app:uniqueness}.

Since the area for all infinitesimal area elements decreases on flowing 
along $s$, the total leaf area also decreases.  In fact, this property holds 
locally.  For any subregion of a renormalized leaf, as one flows inward 
along the holographic slice the area of the subregion decreases monotonically.

\subsubsection{Monotonicity of Entanglement Entropy}

Along with the fact that the renormalized leaf area shrinks, the entanglement 
entropy of subregions also decreases monotonically.  This must necessarily 
happen for a consistent interpretation that the coarse-graining procedure 
continuously removes short range entanglement.  This is precisely the 
spacelike monotonicity theorem of Ref.~\cite{Nomura:2016ikr}, so we will 
only sketch the idea.%
\footnote{This interpretation may in fact be the most natural explanation 
 of why the spacelike monotonicity theorem holds.}

Consider a leaf $\sigma = \sigma(\lambda_0)$ and a renormalized 
leaf obtained after some small amount of radial evolution, $\sigma' 
= \sigma(\lambda_0 + \delta\lambda)$, where $\delta\lambda < 0$. 
A subregion $A$ of $\sigma$ is mapped to subregion $A'$ of $\sigma'$ 
by following the integral curves of $s$.

Suppose the HRRT surface $\gamma(A')$ anchored to $A'$ is the minimal 
surface on a spacelike slice $\Sigma'$.  Now extend $\Sigma'$ by including 
the portion of the holographic slice between $\sigma$ and $\sigma'$, such 
that $\Sigma'$ is now an achronal slice with boundary $\sigma$.  Now consider 
the minimal surface $\Xi(A)$ anchored to $A$ on this extended $\Sigma'$. 
$\Xi(A)$ has a portion in the exterior of $\sigma'$ which can be projected 
down to $\sigma'$ using the normal vector $s$.  This projection, denoted 
by $\Xi(A) \rightarrow \pi(\Xi(A))$, decreases the area of $\Xi(A)$ due to 
the spacelike signature of $\Sigma'$.  This projection results in a surface 
anchored to $A'$, which must have an area greater than that of $\gamma(A')$. 
On the other hand, due to the maximin procedure~\cite{Wall:2012uf}, 
$\Xi(A)$ must also have an area less than the area of the HRRT surface 
$\gamma(A)$ anchored to $A$.  In summary,
\begin{equation}
  \norm{\gamma(A')} \leq \norm{\pi(\Xi(A))} \leq \norm{\Xi(A)} 
  \leq \norm{\gamma(A)},
\end{equation}
where $\norm{x}$ represents the volume of the object $x$ (often called the 
area for a \mbox{codimension-2} surface in spacetime).  The inequalities 
arise from minimization, projection, and maximization, respectively.

\subsubsection{Subregion Flow Contained within Entanglement Wedge}

Suppose one were given access to a finite subregion $A$ on the leaf 
$\sigma$ and chose to apply the holographic slice construction only to 
this subregion. The result would be a sequence of renormalized leaves 
given by $\sigma(\lambda) = A(\lambda) \cup \overline{A}$, with 
$A(\lambda)$ denoting the sequence of subregions that result from 
radially evolving $A$ as illustrated in Fig.~\ref{fig:subregion}.
\begin{figure}[t]
\begin{center}
  \includegraphics[scale=0.3]{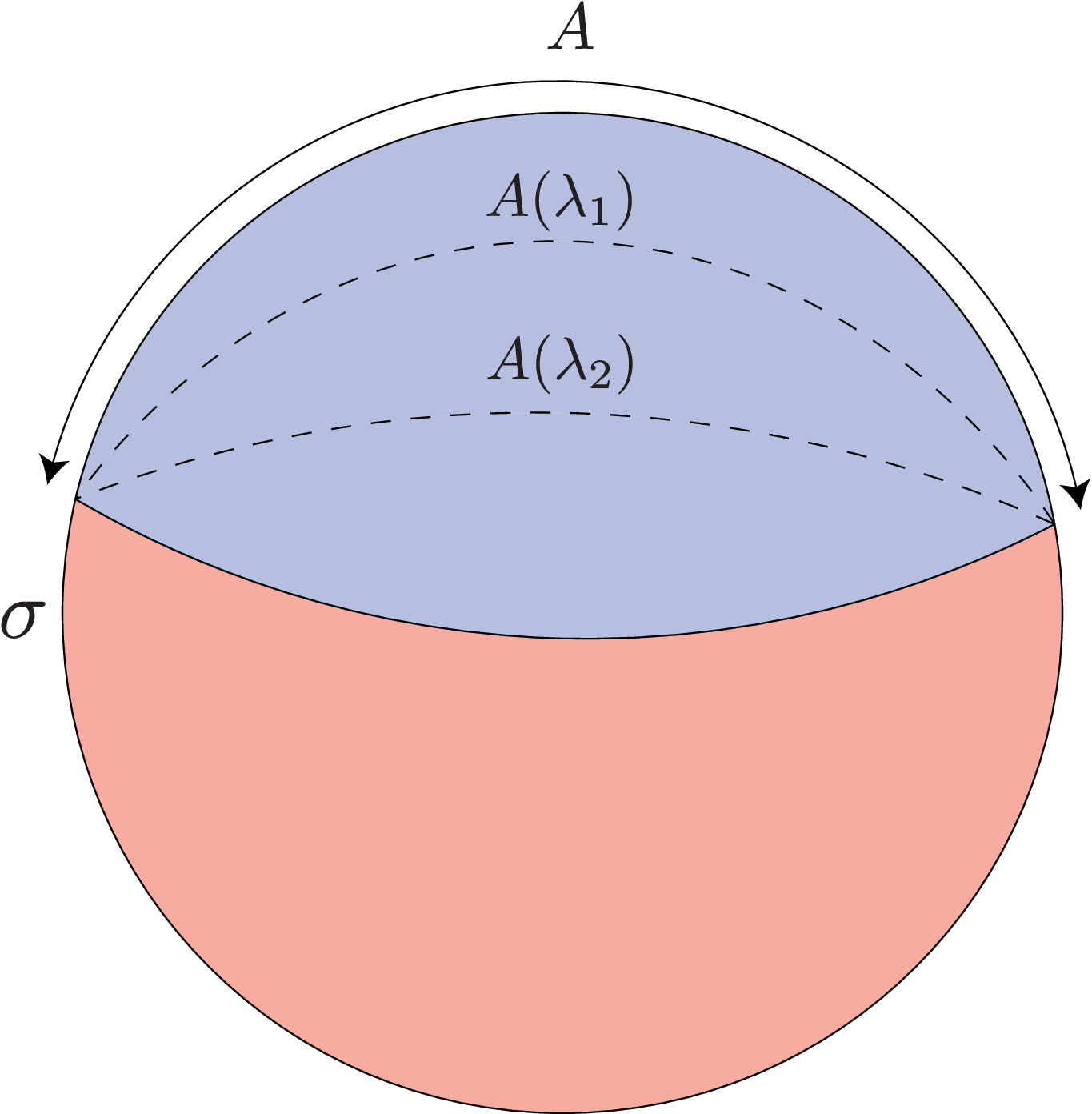}  
\end{center}
\caption{The radial evolution procedure when restricted to a subregion $A$ 
 results in a new leaf $\sigma(\lambda) = A(\lambda) \cup \overline{A}$, 
 where $A$ is mapped to a subregion $A(\lambda)$ contained within 
 $\text{EW}(A)$ (blue).  The figure illustrates this for two values 
 of $\lambda$ with $\lambda_2 < \lambda_1 < 0$ (dashed lines).}
\label{fig:subregion}
\end{figure}

An interpretation of this procedure as coarse-graining requires that it 
should not add any further information than what was already available. 
Thus, it should not allow one to reconstruct points in the bulk beyond 
what was already accessible from $A$, i.e.\ $\text{EW}(A)$.  This is 
ensured by the fact that the boundary of $\text{EW}(A)$ acts as an extremal 
surface barrier for HRRT surfaces anchored to points inside $\text{EW}(A)$, 
and thus, at no step does $A(\lambda)$ cross outside $\text{EW}(A)$.  In 
fact, if there were a non-minimal extremal surface anchored to $A$ which 
is contained within $\text{EW}(A)$, the holographic slice would not be able 
to go beyond this.  This would be the case if one were to consider $A$ to 
be a large subregion of the boundary dual to an AdS black hole.

\subsubsection{Probes Directly Reconstructable Region}

Using the definition of Ref.~\cite{Nomura:2017npr}, we define the directly 
reconstructable region of spacetime as the set of bulk points which can 
be localized by the intersections of some set of boundary anchored HRRT 
surfaces and the boundary of their entanglement wedges.  Boundary operators 
corresponding to the maximally localizable bulk operators are dual to local 
operators in the directly reconstructable region~\cite{Sanches:2017xhn}.

As argued in Ref.~\cite{Nomura:2017npr}, the interior of an equilibrated 
black hole cannot be reconstructed using the intersection of entanglement 
wedges.  Since the horizon acts as a barrier for all extremal surfaces 
anchored to points outside the black hole, $\sigma(\lambda)$ stays outside 
the horizon at each step.  Thus, the holographic slice cannot enter the 
black hole interior.  This implies that bulk regions that are not directly 
reconstructable using the entire holographic screen are inaccessible to 
any holographic slice.

In fact, as long as $\sigma(\lambda)$ does not become extremal, the 
holographic slice procedure can continue moving the leaf spatially inward. 
This is a consequence of Theorem~1 in Ref.~\cite{Nomura:2017fyh}.  As we 
will see later, the radial evolution will only halt once the boundary state 
on the renormalized leaf no longer has distillable local correlations. 
This can happen in two ways.  The first is that the surface closes off 
to zero area (corresponding to the vanishing of the coarse-grained Hilbert 
space).  The second is if the surface asymptotes to a bifurcation surface 
or Killing horizon (corresponding to a maximally entangled state).

Holographic slices probe entanglement shadows, the spacetime regions which 
cannot be probed by HRRT surfaces anchored to a non-renormalized leaf 
$\sigma$.  The extremal surfaces anchored to $\sigma$ that probe the 
shadow regions are non-minimal.  This prevents the reconstruction of 
points in these regions by using intersection of HRRT surfaces anchored 
to $\sigma$.  However, the set of HRRT surfaces used for constructing 
subsequent renormalized leaves need not be minimal on $\sigma$; they 
need only be minimal on the renormalized leaf at hand.  Since non-minimal 
surfaces anchored to $\sigma$ become minimal when anchored to an 
appropriately renormalized leaf $\sigma(\lambda)$, the holographic 
slice can flow through entanglement shadows.  We will see an example 
of this in the next section. 

Not only will the holographic slice itself flow through entanglement 
shadows, HRRT surfaces anchored to a renormalized leaf will probe regions 
behind shadows of the original leaf.   Again, this is because the minimal 
extremal surfaces anchored to renormalized leaves need not be portions 
of minimal extremal surface anchored to the original leaf.  In fact, 
one immediately starts recovering portions of shadow regions once the 
boundary is pulled in.  This is what occurs in dense stars and conical 
AdS.  In these cases, more and more of the shadow is recovered as the 
boundary is pulled in, and the entire shadow is only recovered once the 
slice contracts to a point.%
\footnote{This will be explored in future work.  This is similar 
 to how entanglement of purification probes portions of shadow 
 regions~\cite{Espindola:2018ozt}.  However, to recover the entire 
 shadow region using entanglement of purification, one must impose 
 additional conditions on the purification~\cite{Bao:2017nhh,NB}.}

Because the holographic slices pass through shadows, it seems that 
the collection of all holographic slices anchored to boundary leaves 
will sweep out the directly reconstructable region.  However, this 
breaks down if one considers a disconnected boundary.  Consider the 
case of a two-sided AdS black hole.  The directly reconstructable 
region can include regions behind the horizon if one picks a foliation 
of the left and right boundaries with an offset in time (see Appendix~A 
of Ref.~\cite{Nomura:2017fyh}).  The intersection of HRRT surfaces 
anchored to large subregions with support on both boundaries allow 
for the reconstruction of points behind the horizon.  On the other 
hand, the holographic slice is built from infinitesimal HRRT surfaces 
anchored to individual boundaries and hence cannot recover regions 
built from these long range correlations.  In the two-sided black hole 
(no matter what the offset in boundary times), the holographic slice 
will always connect through the bifurcation surface and never probe 
behind the horizon; Section~\ref{subsec:BH} explains this in detail.

\section{Examples}
\label{sec:example}

In this section, we illustrate salient properties of the holographic slice 
using a few example spacetimes.

\subsection{Conical AdS}
\label{subsec:conical}

We first consider conical $\text{AdS}_3$ to illustrate that the holographic 
slice probes regions inside entanglement shadows.  In order to obtain 
conical $\text{AdS}_3$, we start with the $\text{AdS}_3$ metric
\begin{equation}
  ds^2 = -\biggl( 1 + \frac{r^2}{L^2} \biggr) dt^2 
    + \biggl( 1 + \frac{r^2}{L^2} \biggr)^{-1} dr^2 + r^2 d\theta^2,
\label{eq:cAdS}
\end{equation}
where $L$ is the AdS length scale.  We then perform a $\mathbb{Z}_n$ 
quotient, so that the angular coordinate $\theta$ has periodicity 
$2\pi/n$.  Locally, this spacetime is identical to $\text{AdS}_{3}$, 
and solves the Einstein equations for a negative cosmological constant 
away from $r=0$.  However, there is a conical defect at $r=0$ introduced 
by the $\mathbb{Z}_n$ quotient.

\begin{figure}[t]
  \includegraphics[scale=0.7]{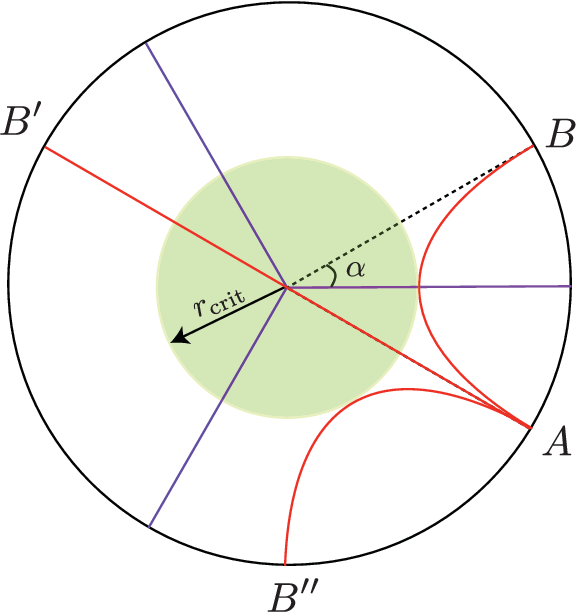} 
\caption{The case of conical $\text{AdS}_3$ with $n=3$.  The points $B$, 
 $B'$, and $B''$ are identified.  There are 3 geodesics from $A$ to $B$, 
 of which generically only one is minimal.  Here, we have illustrated 
 the subregion $AB$ with $\alpha = \pi/6$, where two of the geodesics 
 are degenerate.  This is the case in which the HRRT surface probes deepest 
 into the bulk, leaving a shadow region in the center.  Nevertheless, the 
 holographic slice spans the entire spatial slice depicted.}
\label{fig:cAdS}
\end{figure}
HRRT surfaces in this spacetime simply correspond to minimal length 
geodesics anchored to subregions at the conformal boundary.  As illustrated 
in Fig.~\ref{fig:cAdS}, there are $n$ geodesics in the parent $\text{AdS}_3$ 
spacetime which are candidate extremal surfaces for a given subregion. 
However, generically only one of them is minimal and corresponds to the 
HRRT surface. The geodesics in $\text{AdS}_3$ are described by the equation
\begin{equation}
  \tan^2\!\theta = \frac{r^2 \tan^2\!\alpha -L^2}{r^2 + L^2},
\label{eq:con-geodesic}
\end{equation}
where $\alpha$ is the half-opening angle of the subregion being considered. 
Since the angular coordinate $\theta$ has a periodicity of $2\pi/n$, 
the minimal length geodesic that probes deepest into the bulk is obtained 
when $\alpha = \pi/2n$.  From Eq.~(\ref{eq:con-geodesic}), this gives 
a critical radius of~\cite{Balasubramanian:2014sra}
\begin{equation}
  r_{\text{crit}}(n) = L \cot\left(\frac{\pi}{2n}\right),
\label{eq:rcrit}
\end{equation}
which takes a nonzero finite value for $n \neq 1$.  Thus, the region 
$r < r_{\text{crit}}(n)$ is an entanglement shadow, which cannot be probed 
by HRRT surfaces anchored to the conformal boundary.

The holographic slice is constructed by finding infinitesimal HRRT surfaces 
starting from the (regularized) conformal boundary.  Since the spacetime 
is locally $\text{AdS}_3$, the HRRT surfaces for small regions are identical 
to those in $\text{AdS}_3$.  Because of the static and spherically symmetric 
nature of $\text{AdS}_{3}$, the renormalized leaves correspond to surfaces 
of constant $r$ and $t$.  Now, since $r_{\text{crit}}(n)$ is not an extremal 
surface barrier, as can be seen from the existence of non-minimal extremal 
surfaces penetrating it, the holographic slice suffers no obstruction in 
crossing over to the entanglement shadow.  This implies that the holographic 
slice is simply given by a constant time slice that covers all of the 
spatial region $r \in [0,\infty)$.

In general, holographic slices do not have any difficulty in going into 
entanglement shadow regions, since these shadows are not associated with 
extremal surface barriers which any extremal surfaces anchored to the 
outside cannot penetrate~\cite{Engelhardt:2013tra}.  In fact, holographic 
slices also sweep entanglement shadows other than those in the centers of 
conical AdS, e.g.\ regions around a dense star~\cite{Freivogel:2014lja}.

\subsection{Black Holes}
\label{subsec:BH}

Consider a two-sided eternal $\text{AdS}_{d+1}$ Schwarzschild black hole. 
The metric is given by
\begin{equation}
  ds^2 = -f(r)\, dt^2 + \frac{1}{f(r)} dr^2 + r^2 d\Omega_{d-1}^{2},
\label{eq:eternal}
\end{equation}
where
\begin{equation}
  f(r) = 1 + \frac{r^2}{L^2} -\Bigl( \frac{r_+}{r} \Bigr)^{d-2} 
    \biggl( 1 + \frac{r_+^2}{L^2} \biggr),
\label{eq:fr}
\end{equation}
with $r_+$ being the horizon radius.  As can be seen in Fig.~\ref{fig:BH}, 
the two exterior regions have a timelike Killing vector.  Thus, the HRRT 
surfaces anchored to subregions with support only on one boundary respect 
the Killing symmetry and lie on the constant $t$ slice connecting the 
respective boundary to the bifurcation surface.  Subregions anchored 
on both boundaries could potentially lie on a different spatial slice 
if the HRRT surface is connected.  However, since we are considering 
the holographic slice being built up using HRRT surfaces anchored to 
infinitesimal subregions, those anchored on both sides always stay 
disconnected.
\begin{figure}[t]
  \includegraphics[scale=0.7]{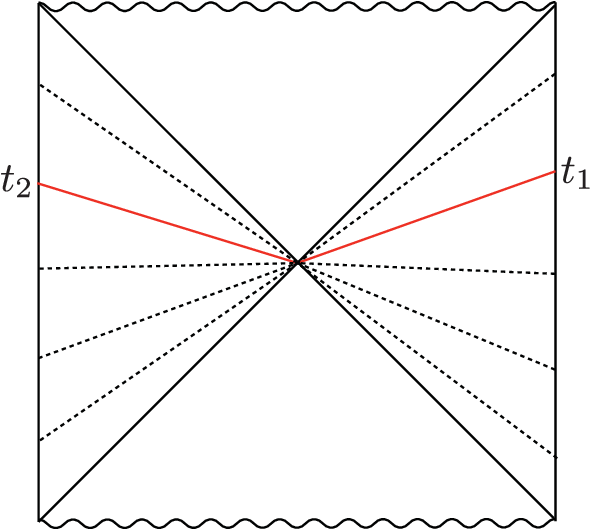} 
\caption{The exterior of a two-sided eternal AdS black hole can be 
 foliated by static slices (black dotted lines).  The holographic slice 
 (red) connects the boundary time slices at $t = t_1$ on the right 
 boundary and $t = t_2$ on the left boundary to the bifurcation surface 
 along these static slices.}
\label{fig:BH}
\end{figure}

Thus, the holographic slice, as seen in Fig.~\ref{fig:BH}, is the union 
of static slices in both exterior regions and terminates at the bifurcation 
surface.  As shown in Ref.~\cite{Nomura:2017fyh}, the bifurcation surface 
itself is extremal and lies on a Killing horizon, and hence the process 
of renormalizing leaves must asymptote to this surface.

The phenomenon of a holographic screen being terminated at a nontrivial 
surface requires the existence of a bifurcation surface, which is absent 
in most physical situations.  For example, consider an AdS-Vaidya metric 
where a black hole is formed from the collapse of a thin null shell of 
energy~\cite{Hubeny:2013dea}.  The metric in ingoing Eddington-Finkelstein 
coordinates is given by
\begin{equation}
  ds^2 = -f(r,v)\, dv^2 + 2dv\, dr + r^2 d\Omega_{d-1}^2,
\label{eq:AdS-V}
\end{equation}
where
\begin{equation}
  f(r,v) = 1 + \frac{r^2}{L^2} - \theta(v)\, 
    \Bigl( \frac{r_+}{r} \Bigr)^{d-2} \biggl( 1 + \frac{r_+^2}{L^2} \biggr),
\label{eq:f_rv}
\end{equation}
with
\begin{equation}
  \theta\left(v\right) = \begin{cases}
    0 & \text{for }\, v < 0 \\
    1 & \text{for }\, v > 0.
  \end{cases}
\label{eq:theta_v}
\end{equation}
The null shell lies at $v = 0$, and this spacetime is obtained simply by 
stitching together an AdS-Schwarzschild metric to the future of the shell 
and a pure AdS metric to the past.  The composite global spacetime is 
time-dependent, but each of the building blocks admits a timelike Killing 
vector locally as shown in Fig.~\ref{fig:Vaidya}.  As discussed earlier, 
since the HRRT surfaces relevant to the holographic slice are those of 
infinitesimal subregions, they only sense the local spacetime, which is 
static.  This allows us to construct the holographic slice independently 
in each region.  The static slices can then be stitched together to obtain 
the holographic slice as shown in Fig.~\ref{fig:Vaidya}.
\begin{figure}[t]
  \includegraphics[scale=0.7]{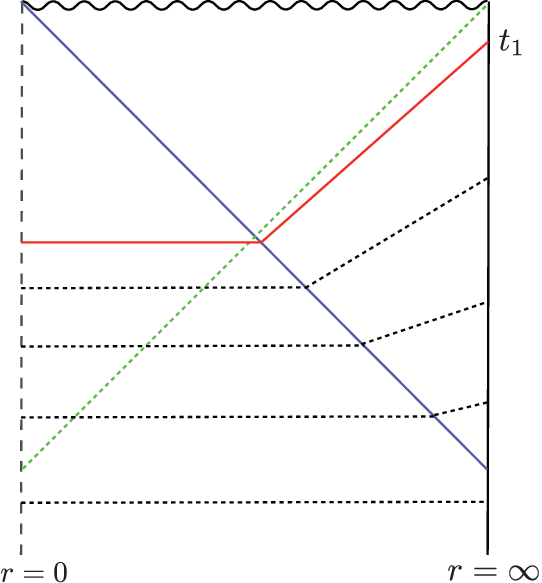}
\caption{Penrose diagram of an AdS Vaidya spacetime formed from the 
 collapse of a null shell (blue), resulting in the formation of an 
 event horizon (green).  Individual portions of the spacetime, the 
 future and past of the null shell, are static.  Thus, the holographic 
 slice (red) can be constructed by stitching together a static slice 
 in each portion.}
\label{fig:Vaidya}
\end{figure}

An important feature here is that at late times, i.e.\ sufficiently after 
the black hole has stabilized, the holographic slice constructed from 
a leaf stays near the horizon for long time.  Eventually, this flow 
terminates at $r = 0$.  This behavior of holographic slices is, in fact, 
general in one-sided black holes; see Fig~\ref{fig:E-F} for a schematic 
depiction.
\begin{figure}[t]
  \includegraphics[scale=0.6]{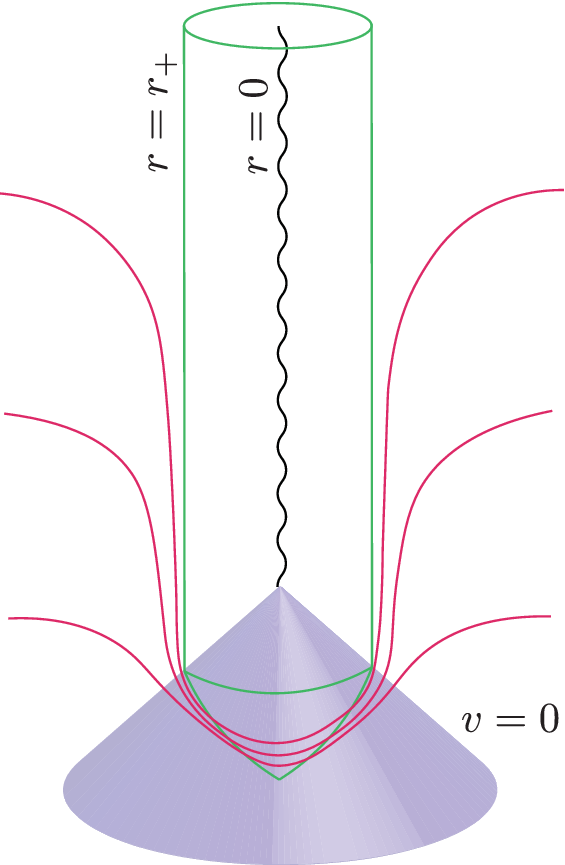}
\caption{A schematic depiction of holographic slices for a spacetime with 
 a collapse-formed black hole in ingoing Eddington-Finkelstein coordinates.}
\label{fig:E-F}
\end{figure}

Note that the picture of Fig~\ref{fig:E-F} is obtained in the $G_{\rm N} 
\rightarrow 0$ limit.  When $G_{\rm N} \neq 0$, renormalized leaves will 
hit the stretched horizon~\cite{Susskind:1993if}, where the semiclassical 
description of spacetime breaks down, before being subjected to the long 
flow near the horizon.

\subsection{FRW Spacetimes}
\label{subsec:FRW}

We now discuss a nontrivial example in a time-dependent spacetime, away 
from the standard asymptotically AdS context.  Consider a $(d+1)$-dimensional 
flat Friedmann-Robertson-Walker (FRW) spacetime containing a single fluid 
component with equation of state parameter $w$
\begin{equation}
  ds^2 = a(\eta)^2 \left( -d\eta^2 + dr^2 + r^2 d\Omega_{d-1}^2 \right).
\label{eq:FRW}
\end{equation}
Here,
\begin{equation}
  a(\eta) = c \,|\eta|^q,
\label{eq:FRWa}
\end{equation}
where $c > 0$ is a constant, and
\begin{equation}
  q = \frac{2}{d - 2 + d w}.
\label{eq:FRWq}
\end{equation}
The discontinuity of $q$ at $w = (2-d)/d$ is an artifact of choosing 
conformal time, and physics is smooth across this value of $w$.

The spherically symmetric holographic screen is located at
\begin{equation}
  r(\eta) = \frac{a(\eta)}{\frac{da}{d\eta}(\eta)} = \frac{\eta}{q}.
\label{eq:FRWscreen}
\end{equation}
By spherical symmetry, the holographic slice must be a \mbox{codimension-1} 
surface of the form $\eta = g(r)$, where each renormalized leaf 
is an $\mathbb{S}^{d-1}$.  Consider a renormalized leaf at 
$\eta = \eta_*$ and $r = r_*$.  Generalizing the results from 
Refs.~\cite{Nomura:2017fyh,Nomura:2017grg}, HRRT surfaces anchored 
to a small spherical cap of half-opening angle $\gamma$ of the renormalized 
leaf are given by
\begin{equation}
  \eta(\xi) = \eta_* + \frac{\dot{a}}{2a} 
    \bigl( \xi_*^2-\xi^2 \bigr) + \dots,
\label{eq:FRW-HRRT}
\end{equation}
where $\xi = r \sin\theta$ and $\xi_* = r_* \sin\gamma$ with 
$\theta$ being the polar angle, and $a \equiv a(\eta_*)$ and 
$\dot{a} \equiv da/d\eta(\eta_*)$.  We refer the reader to 
Appendix~C.3 of Ref.~\cite{Nomura:2017fyh} for more details.

The next renormalized leaf is generated by joining together the deepest 
point of each such HRRT surface.  Suppose $\varDelta \eta$ and $\varDelta r$ 
represent the change in conformal time and radius from one renormalized 
leaf to the next.  Then we have
\begin{align}
  \varDelta \eta &= \frac{\dot{a}}{2a} \xi_*^2 + \dots,
\label{eq:Delta_eta}\\
  \varDelta r &= -( r_* - r_* \cos\gamma ),
\label{eq:Delta_r}
\end{align}
so that
\begin{equation}
  \frac{\varDelta \eta}{\varDelta r} 
  = -\frac{\frac{\dot{a}}{2a} r_*^2 \sin^2\!\gamma}{r_* (1 - \cos\gamma)} 
  =-\frac{\dot{a}}{a} r_*.
\label{eq:Del_eta-Del_r}
\end{equation}
Taking the limit $\gamma \rightarrow 0$, we obtain a differential equation 
for the radial evolution of the holographic slice
\begin{equation}
  \frac{d\eta}{dr} = -\frac{q r}{\eta}.
\label{eq:FRW-evol}
\end{equation}
Integrating this equation gives us
\begin{equation}
  \eta^2 + q r^2 = \eta_*^2 + q r_*^2 
  = \frac{1+q}{q} \eta_0^2,
\label{eq:FRW-slice}
\end{equation}
where $\eta_0$ is the conformal time of the original non-renormalized leaf.

Let us highlight a few interesting features of this holographic slice. 
First, it spans the entire interior region of the holographic screen. 
Next, substituting $r = \eta/q$ into Eq.~(\ref{eq:FRW-evol}) tells us 
that the holographic slice starts out in the null direction from the leaf; 
see Eq.~(\ref{eq:FRWscreen}).  This is because the $k$ direction locally 
has zero expansion there, as discussed in Section~\ref{subsec:prop}. 
As we move inward along the radial flow, however, the slope becomes flatter, 
and eventually the surface reaches the highest point given by
\begin{equation}
  \eta(r=0) = \eta_0 \sqrt{\frac{1+q}{q}}.
\label{eq:FRW-top}
\end{equation}
In Fig.~\ref{fig:FRW} we have depicted holographic slices, given by 
Eq.~(\ref{eq:FRW-slice}), for several values of $w$ with $d = 3$.
\begin{figure}[t]
  \includegraphics[scale=0.75]{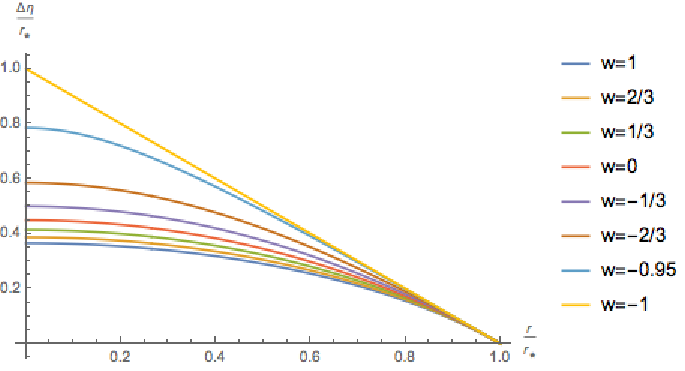}
\caption{Holographic slices of $(3+1)$-dimensional flat FRW universes 
 containing a single fluid component with equation of state parameter $w$.}
\label{fig:FRW}
\end{figure}

\subsection{Asymptotically AdS and Flat Spacetimes}
\label{subsec:AlAdS}

Here we discuss certain subtleties associated with holographic screens 
that lie on an asymptotic boundary.  First, consider a $(d+1)$-dimensional 
asymptotically AdS spacetime, which can be expanded in a Fefferman-Graham 
series~\cite{FG} near the boundary
\begin{equation}
  ds^2 = \frac{L^2}{z^2} \bigl\{ g_{ab}(x^a,z) dx^a dx^b + dz^2 \bigr\},
\label{eq:FG-1}
\end{equation}
where $L$ is the AdS length scale, and
\begin{equation}
  g_{ab}(x^a,z) = g_{ab}^{(0)}(x^a) + z^2 g_{ab}^{(2)}(x^a) + \dots.
\label{eq:FG-2}
\end{equation}
Here, $g_{ab}^{(0)}$ represents the conformal boundary metric, and the 
subleading corrections represent deviations as one moves away from the 
boundary at $z=0$.

In an asymptotically AdS spacetime, the holographic screen $H$ formally 
lies at spacelike infinity.  In order to construct a holographic slice 
in such a situation, one needs to first consider a regularized screen 
$H'$ at $z = \epsilon$ and then take the limit $\epsilon \rightarrow 0$ 
after constructing the slice.  Suppose that a leaf is given by a constant 
$t$ slice of $H'$, with $h_{ij}$ representing the induced metric on the 
leaf.  The null normals are then given by
\begin{align}
  k_\mu &= dz - dt + O(\epsilon^2),
\label{eq:AdS-normal-1}\\
  l_\mu &= -dz - dt +O(\epsilon^2),
\label{eq:AdS-normal-2}
\end{align}
where the $O(\epsilon^2)$ corrections arise due to deviations away from 
the boundary.  The null expansions are
\begin{align}
  \theta_k &= h^{ij} \Gamma^z_{ij} - h^{ij} \Gamma^t_{ij} 
\nonumber\\
  &= -\frac{\epsilon(d-1)}{L} + O(\epsilon^2),
\label{eq:AdS-exp-1}\\
  \theta_l &= -h^{ij} \Gamma^z_{ij} - h^{ij} \Gamma^t_{ij} 
\nonumber\\
  &= \frac{\epsilon(d-1)}{L} + O(\epsilon^2).
\label{eq:AdS-exp-2}
\end{align}
Thus, we see that the expansion $\theta_k$ vanishes only in the limit 
$\epsilon \rightarrow 0$.

This implies that a leaf $\sigma'$ of a regularized screen $H'$ 
($\epsilon \neq 0$) is, in fact, a renormalized leaf ($\theta_k \neq 0$), 
and thus the results in Section~\ref{subsec:prop}---that the holographic 
slice initially extends in the null direction and the leaf area remains 
constant---do not apply.  In fact, the holographic slice extending from 
$\sigma'$ initially evolves inward along the $z$ direction up to corrections 
of $O(\epsilon)$, as can be seen from the fact that $\theta_k = -\theta_l$ 
up to $O(\epsilon)$.  In the limit $\epsilon \rightarrow 0$, both $\theta_k$ 
and $\theta_l$ vanish simultaneously.  This leads to a holographic screen 
at spacelike infinity in a formal sense.%
\footnote{Strictly speaking, this does not satisfy the definition of 
 the holographic screen in Section~\ref{subsec:def}, which requires 
 $\theta_l$ to be strictly positive.}

A similar situation arises in asymptotically flat spacetimes.  A general 
asymptotically flat spacetime can be expanded in the Bondi-Sachs 
form~\cite{Bondi:1962px,Sachs:1962wk} as
\begin{align}
  ds^2 =& -\frac{V}{r} e^{2\beta} du^2 - 2 e^{2\beta} du dr
\nonumber\\
  & {} + r^2 h_{AB} \bigl(dx^A - U^A du\bigr) \bigl(dx^B - U^B du\bigr),
\label{eq:AlFlat}
\end{align}
where each of the functions admits a large $r$ expansion with the 
following behavior:
\begin{alignat}{3}
  & V = r + O(1),\qquad & \beta = O\biggl(\frac{1}{r^2}\biggr),
\label{eq:Bondi-1}\\
  & U^A = O\biggl(\frac{1}{r^2}\biggr), \qquad & h_{AB} = O(1).
\label{eq:Bondi-2}
\end{alignat}

In order to construct a holographic slice, the holographic screen $H$ 
must be regularized to become a timelike surface $H'$ at $r = R$, where 
we can eventually take the limit $R \rightarrow \infty$.  The null normals 
of a leaf on a constant time slice are
\begin{align}
  k_\mu &= du,
\label{eq:Flat-normal-1}\\
  l_\mu &= -\frac{V}{r} du - 2 dr,
\label{eq:Flat-normal-2}
\end{align}
giving the null expansions near the boundary
\begin{align}
  \theta_k &= -\frac{2}{R} + O\biggl(\frac{1}{R^2}\biggr),
\label{eq:Flat-exp-1}\\
  \theta_l &= \frac{2}{R} + O\biggl(\frac{1}{R^2}\biggr).
\label{eq:Flat-exp-2}
\end{align}
Thus, similar to the case of asymptotically AdS spacetimes, a leaf of a 
regularized holographic screen is a renormalized one, and both $\theta_k, 
\theta_l \rightarrow 0$ simultaneously as $R \rightarrow \infty$.

As a simple example, we illustrate the case of a Minkowski spacetime 
in Fig.~\ref{fig:Minkowski}.  As the limit $R \rightarrow \infty$ is 
taken, the holographic slices become complete Cauchy hypersurfaces which 
are constant time slices anchored to spatial infinity.  In the limit 
$t \rightarrow +\infty$ ($-\infty$), future (past) null and timelike 
infinities are obtained as a holographic slice.  In this situation, 
time evolution of the boundary theory from $t \rightarrow -\infty$ 
to $+\infty$ corresponds to an $S$-matrix description of the bulk.
\begin{figure}
  \includegraphics[scale=0.65]{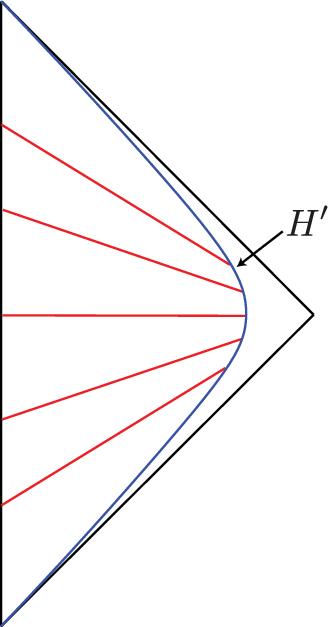}
\caption{Penrose diagram of a Minkowski spacetime.  The holographic 
 slices (red) are anchored to the regularized holographic screen $H'$ 
 (blue).  As the limit $R \rightarrow \infty$ is taken, the holographic 
 slices become complete Cauchy slices.}
\label{fig:Minkowski}
\end{figure}

\section{Interpretation and Applications}
\label{sec:interpret}

We have introduced the geometric definition of the holographic slice and 
demonstrated some of its properties.  But what does the slice correspond 
to in the boundary theory?  What questions can it help us address?  The 
construction of the slice naturally lends itself to an interpretation 
of eliminating information at small scales, and hence can be well 
understood in the context of coarse-graining.  Through this, we can 
think of the slice as an isometric tensor network.  This provides us 
with a new way to think about holographic tensor networks and the bulk 
regions of spacetime that they encode.

Throughout this section we will be talking about various Hilbert 
spaces in which holographic states belong.  To do so, we will be taking 
$G_{\rm N}$ to be finite but small.  This is an appropriate approximation 
for classical spacetimes provided we only concern ourselves with length 
scales sufficiently larger than the Planck length.

\subsection{Coarse-Graining}
\label{subsec:coarse}

We take the view that a boundary state, $\ket{\psi(0)}$, lives on the 
original leaf, $\Upsilon(0)$, i.e.\ it lives in an effective Hilbert 
space, $\mathcal{H}$, having a local product space structure with 
dimension $\log{|\mathcal{H}|} = \norm{\Upsilon(0)}/4 G_{\rm N}$. 
The HRRT prescription says that the emergent bulk geometry is intricately 
related to the entanglement of the boundary state.  In particular, 
despite the fact that bulk information is delocalized in the boundary 
theory, a bulk region cannot be reconstructed if some boundary subregions 
are ignored.  The size of the smallest subregion for this to occur 
gives us some idea of what scale of boundary physics this bulk region 
is encoded in.  Using this intuition, we can then attempt to address 
what coarse-graining the boundary state corresponds to in the bulk.

At each step in the construction of the holographic slice, we eliminate 
the region of spacetime associated with a small length scale, $\delta$, 
of the boundary.  In particular, this is the region of spacetime whose 
information is necessarily lost if we cannot resolve below length scale 
$\delta$.  In this sense, we are coarse-graining over the scale $\delta$ 
and obtaining a new bulk region whose information has not been lost. 
Recursively doing this and sending $\delta$ to zero produces a continuum 
of bulk domains of dependence with unique boundaries sweeping out 
the holographic slice, $\Upsilon(\lambda)$.  This is depicted in 
Fig.~\ref{fig:RG2}.
\begin{figure}[t]
  \includegraphics[scale=0.25]{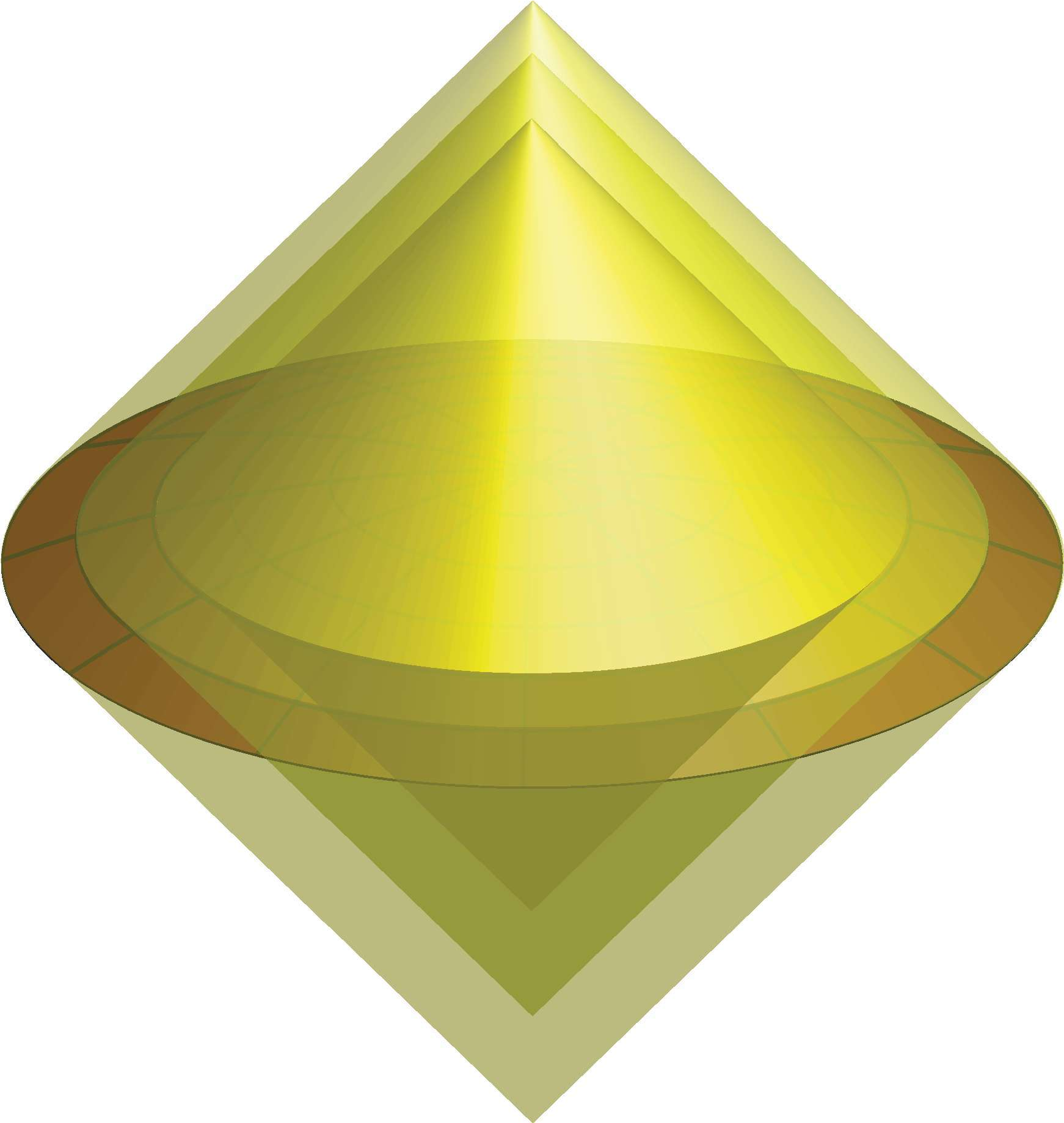}
\caption{This depicts the holographic slice (maroon), and the successive 
 domains of dependence encoded on each renormalized leaf.}
\label{fig:RG2}
\end{figure}

A consistency check for this interpretation is that the size of the 
effective Hilbert space should necessarily decrease as we coarse-grain 
over larger and larger scales.  This is precisely the monotonicity 
property listed in Section~\ref{sec:holo_slice}:\ as one flows along 
the holographic slice, the area of the renormalized leaves decreases. 
This tells us that the size of the effective Hilbert space describing 
the bulk domain of dependence also decreases.

As the coarse-graining procedure progressively removes information 
at small scales, a corresponding removal of bulk information closest 
to the renormalized leaf occurs.  Given this global removal of short 
range information, one should expect the entanglement between any region 
and its complement to correspondingly decrease.  This is precisely 
the monotonicity of entanglement entropy property observed in 
Section~\ref{sec:holo_slice}.  This is consistent with the interpretation 
that at each step we are removing short range entanglement.

Using the holographic slice, we can address the question of how much 
entanglement between a subregion and its complement is sourced by physics 
at different scales.  By following the integral curves of $s$ for the 
subregion, we can stop at whatever scale we desire and use the HRRT 
prescription on the renormalized leaf.  This gives us the entanglement 
entropy sourced by physics at length scales larger than that associated 
with the renormalized leaf.

\subsection{Radial Evolution of States}
\label{subsec:rs-corresp}

We will now be more explicit in describing the framework for coarse-graining 
holographic states.  Given a bulk region $D(\Upsilon(0))$, there exists 
a quantum state $\ket{\psi(0)}$ living in some fundamental holographic 
Hilbert space, $\mathcal{H}_{\rm UV}$, in which the bulk information 
of $D(\Upsilon(0))$ is encoded via the HRRT prescription.  This implies 
that $\mathcal{H}_{\rm UV}$ has a locally factorizable structure.  On 
the other hand, the area of $\Upsilon(0)$ provides an upper bound for 
the dimension of effective Hilbert space that $\ket{\psi(0)}$ lives in, 
which we call $\mathcal{H}_{\Upsilon(0)}$.  That is, $\ket{\psi(0)} \in 
\mathcal{H}_{\Upsilon(0)} \subset \mathcal{H}_{\rm UV}$.  The dimension, 
$|\mathcal{H}_{\Upsilon(0)}|$, of the effective Hilbert space is defined as
\begin{equation}
  \ln{|\mathcal{H}_{\Upsilon(0)}|} 
  = \sum_i S_i = \frac{\norm{\Upsilon(0)}}{4 G_{\rm N}}.
\label{eq:eff-dim}
\end{equation}
Here, $S_i$ represents the entanglement entropy of $\ket{\psi(0)}$ in an 
infinitesimally small subregion, $A_i$, of the holographic space $\Omega$ 
on which $\mathcal{H}_{\rm UV}$ is defined.  We sum over all of these small 
subregions such that $\Omega = \cup_i A_i$ and $A_i \cap A_j = \emptyset$ 
($i \neq j$).  This reduces to calculating the area of $\Upsilon(0)$ 
because of the HRRT prescription.  Namely, the size of the effective 
Hilbert space that $\ket{\psi(0)}$ lives in is determined by the entanglement 
between the fundamental degrees of freedom of $\mathcal{H}_{\rm UV}$, and 
$\norm{\Upsilon(0)}/4 G_{\rm N}$ is the thermodynamic entropy associated 
with this entanglement structure.

As one continuously coarse-grains $\ket{\psi(0)}$, information encoded in 
small scales is lost.  Correspondingly, information of the bulk geometry 
that is stored in small scales is lost, and the dimension of the 
effective Hilbert space that the coarse-grained state lives in decreases. 
At a given scale of coarse-graining corresponding to $\lambda$, the 
new state, $\ket{\psi(\lambda)}$, lives in the same Hilbert space as 
the original leaf, $\mathcal{H}_{\rm UV}$, but now in an effective 
subspace, $\mathcal{H}_{\Upsilon(\lambda)}$, with dimension given by 
$\ln{|\mathcal{H}_{\Upsilon(\lambda)}|} = \norm{\Upsilon(\lambda)}/4 
G_{\rm N}$.  Additionally, $\ket{\psi(\lambda)}$ only contains information of 
$D(\Upsilon(\lambda))$, as we have explicitly lost the information necessary 
to reconstruct any part of $D(\Upsilon(0)) \setminus D(\Upsilon(\lambda))$.

Because all of the coarse-grained states live in the same Hilbert space, 
$\mathcal{H}_{\rm UV}$, we can consider the coarse-graining procedure 
as a unitary operation that takes us from state to state, along the lines 
of the work of Ref.~\cite{Nozaki:2012zj}.  That is,
\begin{equation}
  \ket{\psi(\lambda)} = U(\lambda, 0) \ket{\psi(0)}.
\label{eq:unitary}
\end{equation}
We can write $U(\lambda_1, \lambda_2)$ as
\begin{equation}
  U(\lambda_1, \lambda_2) 
  = P \exp\left[ -i \int_{\lambda_2}^{\lambda_1} K(\lambda) d\lambda \right],
\label{eq:unitary-def}
\end{equation}
where $P$ represents path-ordering.  $K(\lambda)$ is a Hermitian operator 
removing physical correlations between nearby subregions at the length 
scale, $l_\lambda$, associated to $\lambda$ on $\Omega$.  Some appropriate 
measures of physical correlations would be the mutual information 
($I(A,B)$), entanglement negativity ($N(A,B)$), or the entanglement 
of purification ($E(A,B)$) between neighboring small subregions of the 
leaf under consideration, $\Upsilon(\lambda)$.  Note that as the boundary 
state becomes maximally entangled, all three of these measures will 
vanish.  This also happens if the boundary state in consideration has 
no entanglement, i.e.\ is a product state.

Of the three measures, entanglement of purification already has a bulk 
description that naturally characterizes some measure of moving into 
the bulk.  In particular, Refs.~\cite{Takayanagi:2017knl,Nguyen:2017yqw} 
proposed that the entanglement of purification of two boundary subregions, 
$A$ and $B$, is calculated by the minimum cross section, $\zeta$, 
of a bipartition of the extremal surface anchored to $A \cup B$; see 
Fig.~\ref{fig:purification}.  Considering the case where $\partial A$ 
and $\partial B$ coincide at some point, and whose connected phase is 
the appropriate extremal surface, the entanglement of purification gives 
some measure of the depth of the extremal surface.  In bulk dimensions 
higher than $2+1$, $\norm{\zeta}$ will not be in units of length, but 
it is still related to the depth of the extremal surface.  Thus it seems 
natural that $K(\lambda)$ is some \mbox{(quasi-)local} function, $F$, 
of physical correlations, including but not necessarily limited to 
quantum entanglement, at the scale $l_\lambda$.  For example, it may 
be related to the entanglement of purification:
\begin{equation}
  K(\lambda) \sim \int\! d^{d-1}{\bf x}\, 
    F\bigl( E_\lambda({\bf x}) \bigr),
\label{eq:K-lambda}
\end{equation}
where ${\bf x}$ are the coordinates of $\Omega$.  Here, $E_\lambda({\bf x})$ 
is the ``density'' of the entanglement of purification between the degrees 
of freedom in two neighboring regions on $\sigma(\lambda)$ around ${\bf x}$.
\begin{figure}[t]
\includegraphics[scale=1.2]{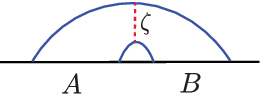}
\caption{Let $A$ and $B$ two boundary subregions.  The blue lines 
 represent the HRRT surface of $A \cup B$ and $\zeta$ the minimal cross 
 section.  The entanglement of purification of $A$ and $B$ is given by 
 $\norm{\zeta}/4 G_{\rm N}$.  In the limit that $A$ and $B$ share a boundary 
 point, $\zeta$ probes the depth of the extremal surface.}
\label{fig:purification}
\end{figure}

Alternately, the case of subregion coarse-graining as in 
Fig.~\ref{fig:subregion} motivates the usage of results from 
Ref.~\cite{Faulkner:2017vdd}, which can be used to map the state 
from $\sigma$ to $\sigma(\lambda)$.  In AdS/CFT, modular evolution 
allows one to explicitly reconstruct bulk operators on the HRRT surface. 
With our assumption that the HRRT formula holds (with quantum corrections), 
a similar construction should be possible given complete knowledge of 
the boundary theory.  $K(\lambda)$ may then be better understood as 
a convolution over modular evolutions with infinitesimal boundary 
subregions:
\begin{equation}
  K(\lambda) \sim \int\! d^{d-1}{\bf x}\, 
    F\bigl( \kappa_\lambda({\bf x}) \bigr),
\label{eq:K-modular}
\end{equation}
where $\kappa_\lambda({\bf x})$ is the modular Hamiltonian density on 
$\sigma(\lambda)$ at ${\bf x}$.  It would be interesting to make the 
connection of $K(\lambda)$ to modular evolution clearer in the future.

The process of removing short range correlations continues until all 
correlations at the scale $l_\lambda$ have been removed, and hence 
no more bulk spacetime can be reconstructed.  This can happen when the 
slice contracts to a point and no local product structure exists in the 
effective Hilbert space $\mathcal{H}_{\Upsilon(\lambda)}$.  Note that in 
$\mathcal{H}_{\rm UV}$ this state corresponds to a product state, so that 
$S_i = 0$ for every subregion in $\mathcal{H}_{\rm UV}$.  The other way 
in which all relevant correlations vanish is when the coarse-grained 
state becomes maximally entangled in $\mathcal{H}_{\Upsilon(\lambda)}$. 
In $\mathcal{H}_{\rm UV}$, this corresponds to a state which satisfies 
$S_A = \sum_{i \subset A} S_i$ for all subregions, $A$, of $\Omega$.

When the coarse-grained state becomes maximally entangled in 
$\mathcal{H}_{\Upsilon(\lambda)}$, $U(\lambda + d\lambda,\lambda)$ 
becomes the identity for Eq.~(\ref{eq:K-lambda}) as $K(\lambda)$ becomes 
$0$.  Hence, the state remains invariant under the coarse-graining 
operation.  There are two ways for this to happen geometrically.  One 
is if the slice approaches a bifurcation surface; then the extremal 
surfaces coincide with the renormalized leaf, hence preventing any 
further movement into the bulk.  This is the case for eternal two-sided 
black holes.  The second is if the slice approaches a null, non-expanding 
horizon and the state is identical along the horizon.  This is the 
case in de~Sitter space.  This result is complementary to Theorem~1 
of Ref.~\cite{Nomura:2017fyh}, which proves that if a boundary state 
is maximally entangled, it must either live on a bifurcation surface 
or a null, non-expanding horizon.

This may initially seem like a contradiction---that both the state becomes 
maximally entangled and that there are no more correlations to harvest. 
However, it is precisely because we are examining correlations at 
\textit{small} scales that this occurs.  A small boundary region is 
maximally entangled with the entire rest of the boundary, and hence 
the short range entanglement must vanish.  One can quantify this by 
examining the entanglement negativity of bipartitions of small subregions 
on $\sigma(\lambda)$.  As states become maximally entangled, the 
entanglement negativity vanishes for two small subregions.  This places 
an upper bound on the real, distillable entanglement between these 
subregions.  Hence, the true quantum entanglement at small scales vanishes 
as a state becomes maximally entangled.  Correspondingly, the coarse-graining 
procedure halts.  This is indeed what happens to the holographic slice.

The same can also be seen by considering the modular evolution, 
Eq.~(\ref{eq:K-modular}), of a maximally entangled subregion.  In this 
case, the modular evolution is proportional to the identity operator, 
and hence the modular flow of the state is stationary.  This corresponds 
to no movement into the bulk as expected by the properties of the 
holographic slice for maximally entangled states.

\subsection{Tensor Network Picture}
\label{subsec:tensor}

In this language the relationship to tensor networks is very clear. 
The holographic slice arises as the continuous limit of a tensor 
network that takes a boundary state and disentangles below a certain 
scale, reducing the effective Hilbert space size.  This is a slight 
generalization of continuous Multiscale Entanglement Renormalization 
Ansatz (cMERA)~\cite{Haegeman:2011uy}, which is restricted to hyperbolic 
geometries.

In general, we can consider a tensor network as a non-continuum modeling 
of the holographic slice, which isometrically embeds boundary states into 
spaces of lower effective dimension by removing short range correlations; 
see Fig.~\ref{fig:TNetwork}.  In the ground state of AdS/CFT, this 
corresponds to an instance of Multiscale Entanglement Renormalization 
Ansatz (MERA)~\cite{Vidal:2007hda}.  Each layer of the tensor network 
then lives on the corresponding renormalized leaf of the discrete version 
of $\Upsilon$.  From this we see that the tensor network lives on this 
discrete version of the holographic slice.  Because isometric tensor 
networks obey a form of HRRT, one may mistakenly conclude that the cut 
through the network computes the area of the corresponding surface in 
the bulk along the holographic slice.  This is generally not the case; 
the maximin method tells us that this area provides only a lower bound 
on the entanglement.  The entanglement calculated in this way instead 
corresponds to the area of the HRRT surface anchored to the appropriate 
subregion of a renormalized leaf.  In other words, the tensor network 
should not be viewed as a discretization of the holographic slice, but 
rather as a set of boundary states dual to successively smaller domains 
of dependence.
\begin{figure}[t]
  \includegraphics[scale=0.25]{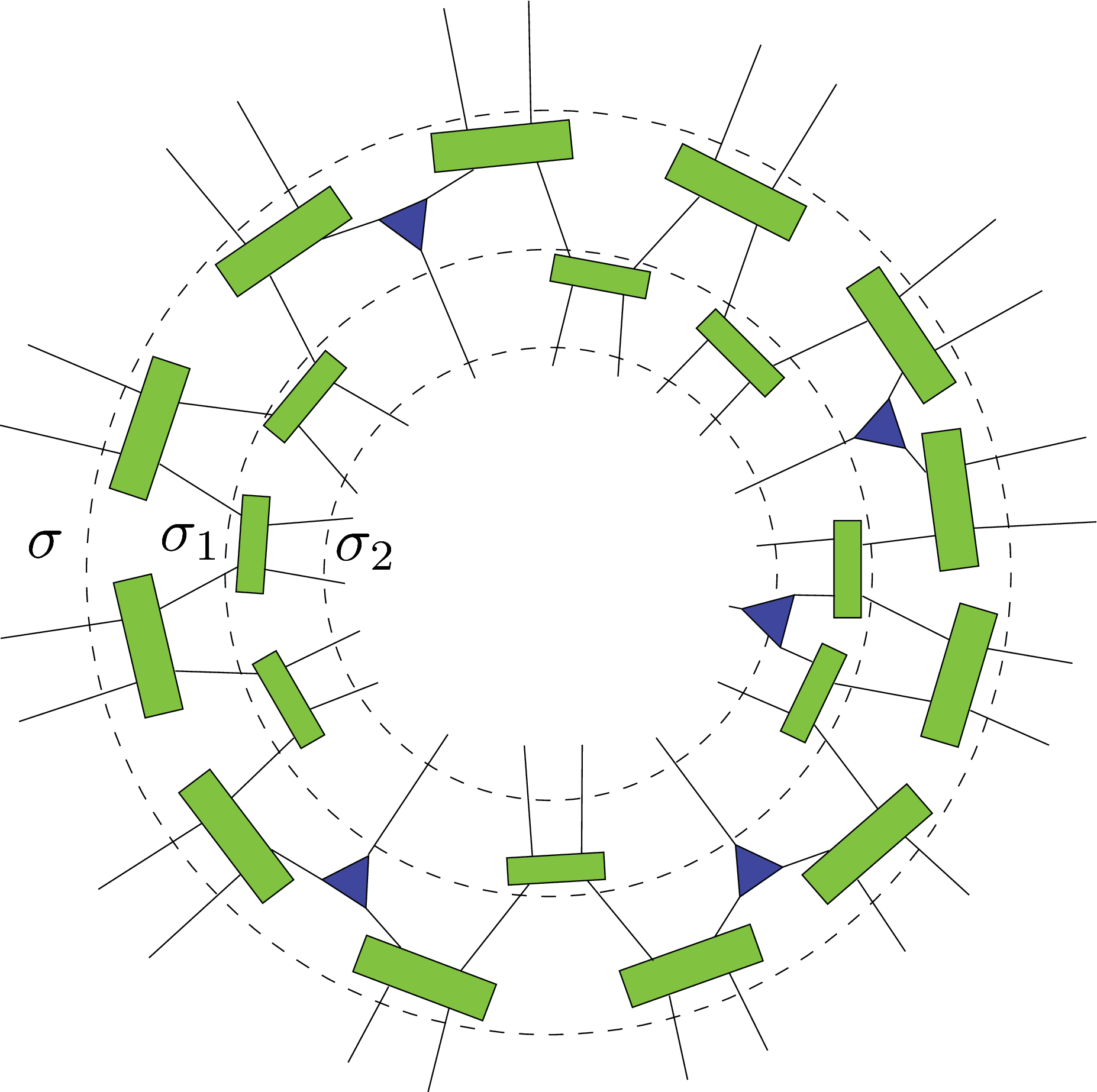}
\caption{A tensor network for a non-hyperbolic geometry.  The green 
 rectangles correspond to disentanglers while the blue triangles are 
 coarse-graining isometries.  Each internal leg of the tensor network 
 has the same bond dimension.  We are imagining that $\sigma$ corresponds 
 to a leaf of a holographic screen and each successive layer ($\sigma_1$ 
 and $\sigma_2$) is a finite size coarse-graining step of the holographic 
 slice.  Through this interpretation, the tensor network lives on the 
 holographic slice.  However, the entanglement entropy calculated via 
 the min-cut method in the network does not correspond to the distance 
 of the cut along the holographic slice in the bulk.  It corresponds to 
 the HRRT surface in the appropriate domain of dependence.  The locations 
 of $\sigma_1$ and $\sigma_2$ in the bulk are found by convolving the 
 HRRT surfaces for the small regions being disentangled and coarse-grained. 
 The holographic slice is a continuous version of this tensor network.}
\label{fig:TNetwork}
\end{figure}

In fact, this interpretation can be applied to any isometric tensor network, 
and we argue that this is the proper way to view tensor networks representing 
bulk spacetimes.  That is, given a state represented by an isometric tensor 
network, one can find a set of states by pushing the boundary state through 
the tensors one layer at a time such that no two layers have the same 
boundary legs.%
\footnote{This picture can perhaps be used to show that the dynamical 
 holographic entropy cone is contained within the holographic entropy 
 cone~\cite{Bao:2015bfa}.  By explicitly constructing the holographic 
 slice tensor network that encodes a state of a time dependent geometry, 
 we will have found a model that encodes the entropies in a way that 
 ensures containment within the holographic entropy cone.}
These successive states are then dual to bulk domains of dependence that 
are successively contained in each other, and whose boundaries lie on 
(a discrete version of) the holographic slice.

\subsection{Time Evolution and Gauge Fixing}
\label{subsec:gauge-fix}

The preferred holographic slice of Section~\ref{subsubsec:unique} provides 
us with a novel way to foliate spacetimes.  By applying the holographic 
slice procedure to each boundary time slice, one foliates the bulk 
spacetime with holographic slices.

In order for this to provide a good gauge fixing, the holographic slices 
generated from different boundary time slices must not intersect.  In 
spherically symmetric cases, these intersections do not occur.  From the 
spherical symmetry of the spacetime, the renormalized leaves must also 
be spherical.  Thus, if two holographic slices did intersect, they must 
intersect at a renormalized leaf.  However, the evolution of the slice 
is unique from this leaf, and hence these two slices do not intersect. 
Furthermore, the reverse flow is also unique and hence the slices must 
exactly coincide.  This prevents ambiguities in the gauge fixing of the 
bulk spacetime.

Outside of spherically symmetric cases, if $\text{sgn}(K)$, the sign of 
the extrinsic curvature of the slice, is constant over the slice then 
no slices will intersect.  By joining a slice with the past (future) 
portion of the holographic screen in the $K \ge 0$ ($\le 0$) case, one 
can create a barrier for extremal surfaces anchored in the interior 
of the barrier.  This implies that any slice constructed from a leaf 
cannot penetrate slices generated from leaves towards its future (past) 
if $K \ge 0$ ($\le 0$).

As explained in Sections~\ref{subsec:prop} and \ref{sec:example}, the 
foliation generated by the holographic slices will not probe behind 
late-time horizons.  Thus this foliation provides a gauge fixing of 
the region of spacetime exterior to black hole horizons.  In this 
region, the foliation provides a covariant map from boundary time 
slices to bulk time slices.

\section{Relationship to Renormalization}
\label{sec:renorm}

In this section, we discuss how our coarse-graining procedure is related 
to conventional renormalization, both in the context of standard quantum 
field theories and AdS/CFT.

\subsection{Analogy to Renormalization in Quantum Field Theories}
\label{subsec:analogy}

We first draw an analogy between pulling in the boundary along the 
holographic slice and standard renormalization in quantum field theories. 
In particular, we liken the limitations of fixed order perturbation 
theory with the existence of reconstructable shadows.  We begin by 
reviewing renormalization in quantum field theories phrased in a way 
to make the relationship clear.

Suppose one computes the amplitude of a process involving two widely 
separated mass scales $m$ and $E$ in fixed order perturbation theory. 
In terms of a renormalized coupling constant $g$, it is given generally 
in the form
\begin{equation}
  {\cal M} = \sum_{n=0}^\infty c_n 
    \biggl( \frac{g}{16\pi^2} \ln\frac{E}{m} \biggr)^n,
\label{eq:M-fixed}
\end{equation}
where $c_n$'s are of the same order.  This implies that even if 
$g/16\pi^2$ is small, this perturbation theory breaks down when 
$\ln(E/m) \sim 16\pi^2/g$.

There is, however, a way to resum these logarithms---the renormalization 
group.  Introducing the concept of running coupling 
constant, $g(\mu)$, defined at a sliding scale $\mu$, the amplitude 
of Eq.~(\ref{eq:M-fixed}) can be written as
\begin{equation}
  {\cal M} = \sum_{n=0}^\infty c_n 
    \biggl( \frac{g(\mu)}{16\pi^2} \ln\frac{E}{\mu} \biggr)^n.
\label{eq:M-RG}
\end{equation}
The process can now be calculated perturbatively as long as both $g(m)/16\pi^2$ 
and $g(E)/16\pi^2$ are small, where $g(m)$ and $g(E)$ are related by a 
continuous renormalization group evolution.  In general, the range of 
validity of this renormalization group improved perturbation theory is 
larger than that of fixed order perturbation theory.

This phenomenon is analogous to the existence of shadows in the holographic 
reconstruction.  If one tries to reconstruct the bulk in a ``single shot'' 
using HRRT surfaces anchored to the original leaf, then there can be 
regions in spacetime (shadows) that cannot be reconstructed.  This, however, 
is not a fundamental limitation of the perturbative reconstruction of 
the bulk.  As we have seen, we can reconstruct a portion of entanglement 
shadows by performing a reconstruction in multiple steps:\ first renormalizing 
the leaf and then using HRRT surfaces anchored to the renormalized leaf. 
By doing this renormalization with more steps, one can progressively probe 
deeper into shadow regions.  Going to the continuum limit (the holographic 
slice), we find that we can describe physics in shadows without difficulty.

Even with the renormalization group improvement, the perturbative description 
of physics stops working when $g(\mu)$ hits a Landau pole or approaches 
a strongly coupled fixed point.  This is analogous to the fact that the 
evolution of the holographic slice halts, $U(\lambda_1, \lambda_2) \propto 
\mathds{1}$ in Eq.~(\ref{eq:unitary-def}), once the renormalized leaf 
contracts to a point or approaches a horizon.  Incidentally, this picture 
is consonant with the idea that describing the interior of a black hole 
would require ``nonperturbative'' physics.%
\footnote{In particular, an interior description may require changing the 
 basis of multiple black hole microstates~\cite{Nomura:2012ex,Bao:2017who}, 
 each of which can be viewed as having different background geometries 
 corresponding to slightly different black hole masses~\cite{Nomura:2014woa,%
 Nomura:2016qum}.}

We stress that from the boundary point of view, renormalization of a 
leaf corresponds to coarse-graining of a state at a fixed time.  A natural 
question is if there is an effective theory relating coarse-grained states 
at different times.  We do not see a reason to doubt the existence of such 
a theory, at least for degrees of freedom sufficiently deep in the bulk. 
Since the coarse-graining depends on the state, however, the resulting 
description may well be applicable only within a given geometry, i.e.\ 
a selected semiclassical branch of the fundamental state in quantum gravity.

\subsection{Comparison to Holographic Renormalization in AdS/CFT}
\label{subsec:rel-AdSCFT}

How is the holographic slice related to holographic renormalization 
in asymptotically AdS spacetimes?  There has been extensive literature 
devoted to the latter subject.  Here we will highlight the essential 
difference between our renormalization procedure and the perspective 
of Susskind and Witten~\cite{Susskind:1998dq}.

If one kept $N^2$ (the number of field degrees of freedom) fixed per cutoff 
cell and applied the Susskind-Witten method of regularization deeper in 
the bulk, a local description of physics on the boundary would break down 
once $R \approx l_{\rm AdS}$.  Here, $R$ and $l_{\rm AdS}$ are the radius 
of the cutoff surface and the AdS length scale, respectively.  This is 
because the number of degrees of freedom within an $l_{\rm AdS}$-sized 
bulk region is of order $(l_{\rm AdS}/l_{\rm P})^{d-1}$, which is just 
$N^2$.  Here, $l_{\rm P}$ is the Planck length in the bulk.

However, holography extends to sub-AdS scales, and the extremal surfaces 
anchored to a cutoff at $l_{\rm AdS}$ satisfy the appropriate properties 
to be interpreted as entanglement entropies~\cite{Miyaji:2015yva,%
Sanches:2016sxy}.  As emphasized throughout the text, the connection 
between entanglement and geometric quantities seems to extend beyond 
AdS/CFT~\cite{Hayden:2016cfa,Harlow:2016vwg}.  Because of this, we expect 
that there should be some way to renormalize the boundary state in such 
a way to preserve the HRRT prescription at all scales.  This is, however, 
prohibited if we fix $N^2$ because already at an $l_{\rm AdS}$-sized 
region we lose the ability to talk about the entanglement of boundary 
subregions (as there is only one boundary cell).

Therefore, if one wants to preserve the ability to use the HRRT 
prescription, $N^2$ must change as the boundary is pulled in.  Simply 
requiring that $N^2 \ge 1$ per cell will allow renormalization down to 
$l_{\rm P}$.  This is easily seen by noticing that the number of cells 
for a boundary moved in to radius $R$ ($\leq l_{\rm AdS}$) is given 
by the total number of bulk degrees of freedom, $(R/l_{\rm P})^{d-1}$, 
divided by $N^2$.  This implies that when $R = l_{\rm P}$ the number 
of cells is of order unity, and the holographic description must break 
down.  In fact, the bulk description is expected to break down before 
this happens.  Suppose that the gauge coupling, $g$, of the boundary 
theory stays constant.  Then the requirement of a large 't~Hooft 
coupling, $N^2 \ge 1/g^4 = (l_{\rm s}/l_{\rm P})^{d-1}$, implies that 
the bulk spacetime picture is invalidated when $R \lesssim l_{\rm s}$. 
Here, $l_{\rm s}$ is the string length.  Assuming the existence of 
a renormalization scheme preserving the HRRT prescription implies that 
there exists a way to redistribute the original $N^2$ degrees of freedom 
spatially on a coarse-grained holographic space.

The construction of the holographic slice requires extremal surfaces 
to be anchored to renormalized leaves, so the renormalization procedure 
utilized must necessarily preserve the ability to use the HRRT prescription. 
The holographic slice, therefore, must employ the special renormalization 
scheme described above.

\section{Discussion}
\label{sec:disc}

The holographic slice is defined using HRRT surfaces, and hence is 
inherently background dependent.  This prohibits the use of the holographic 
slice as some way to analyze the coarse-grained behavior of complex 
quantum gravitational states with no clear bulk interpretation.  In 
particular, if a state is given by a superposition of many different 
semiclassical geometries,
\begin{equation}
  \ket{\Psi} = c_1 \ket{\psi_1} + c_2 \ket{\psi_2} + \cdots,
\label{eq:branch}
\end{equation}
then the holographic slice prescription can be applied to each branch 
of the wavefunction, $\ket{\psi_i}$, independently.  However, there is 
no well-defined slice for $\ket{\Psi}$.  This is the same limitation 
one would face when considering the entanglement wedge of similar states. 
Despite this, for superpositions of states within the code subspace, 
the analyses of Refs.~\cite{Nomura:2017npr,Almheiri:2016blp} tell us 
the holographic slice construction is well defined.

Regardless, the holographic slice sheds light on the nature of bulk 
emergence.  The construction of the slice harvests short range entanglement 
between small subregions, not in the form of entanglement entropy.  It 
is precisely this that allows the slice to flow into the bulk and through 
entanglement shadows.  This work emphasizes the idea that entanglement 
entropy as measured by von~Neumann entropy is not sufficient to 
characterize the existence of a semiclassical bulk viewed from the 
boundary.  Other measures of entanglement (negativity, entanglement 
of purification, etc.) may be more useful to analyze bulk emergence. 
This was explored extensively in Ref.~\cite{Nomura:2017fyh}.

The slice additionally provides a very natural interpretation for 
non-minimal extremal surfaces as the entanglement entropy for subregions 
of coarse-grained states.  Because the coarse-graining procedure 
mixes up the boundary degrees of freedom while removing the short 
range information, the interpretation of non-minimal extremal surfaces 
in terms of purely UV boundary terms will necessarily be very 
complicated~\cite{Balasubramanian:2014sra}.  However, once coarse-graining 
occurs these complicated quantities manifest with a simple interpretation. 
This is also what is seen in the entanglement of purification calculations.

By assuming that the holographic states all live within the same infinite 
dimensional Hilbert space, ${\cal H}_{\rm UV}$, we were able to discuss 
the mapping from a boundary state to a coarse-grained version of 
itself.  This is what gave rise to the $K(\lambda)$ operator in 
Section~\ref{subsec:rs-corresp}.  Alternatively, rather than use 
${\cal H}_{\rm UV}$ to discuss coarse-graining, one can use it to 
talk about time evolution in the boundary theory.  One of the major 
hurdles in formulating theories for holographic screens is the fact 
that the area of the screens are non-constant.  If one were to view 
this area as determining the size of the true Hilbert space the state 
lived in, then time evolution would require transitions between Hilbert 
spaces.  However, by viewing the leaf area as a measure of the size of 
the effective subspace that the state lives in, we are free from this 
complication.  In fact, modeling time evolution is similar to performing 
the reverse of the coarse-graining operation.  This introduces 
entanglement at shorter and shorter scales, which increases the 
effective subspace's size.  Of course, time evolution must account 
for other complex dynamics, but simply increasing the screen area 
is no difficulty.  This interpretation suggests that the area of 
holographic screens is a thermodynamic entropy measure, rather than 
a measure of the fundamental Hilbert space size.

Concluding, the holographic slice is a novel, covariantly defined 
geometric object.  It encodes the bulk regions dual to successively 
coarse-grained states and we propose that the flow along the slice is 
governed by distillable correlations at the shortest scales.  This may 
be related to the entanglement of purification of small regions or the 
modular evolution of such regions.  Investigation of the explicit boundary 
flow along the slice seems to be the most promising avenue of future 
work.  It may also be fruitful to study the mean curvature vector flow 
of \mbox{codimension-2} convex surfaces in Lorentzian spacetimes, as 
characterizing solutions to this flow may provide insights into the 
coarse-graining operation.

\acknowledgments

We thank Chris Akers, Ning Bao, Raphael Bousso, Venkatesa Chandrasekaran, 
Grant Remmen, Fabio Sanches, and Arvin Shahbazi-Moghaddam for discussions. 
This work was supported in part by the National Science Foundation under grants 
PHY-1521446, by MEXT KAKENHI Grant Number 15H05895, and by the Department 
of Energy, Office of Science, Office of High Energy Physics, under 
contract No.\ DE-AC02-05CH11231.

\appendix

\section{Intersection of Domains of Dependence}
\label{app:domains}

\begin{lemma}
Let $\Sigma$ be a closed, achronal set and $D(\Sigma)$ be the domain of 
dependence of $\Sigma$.  Let $p$ and $q$ be points in $D(\Sigma)$, and 
$\lambda$ a causal curve such that $\lambda(0)=p$ and $\lambda(1)=q$ 
where $p$ lies to the past of $q$.  Then, all points $r=\lambda(t)$ for 
$t\in\left[0,1\right]$ are contained in $D(\Sigma)$.
\label{lem:1}
\end{lemma}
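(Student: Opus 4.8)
The plan is to argue by contraposition directly from the definition $D(\Sigma)=D^{+}(\Sigma)\cup D^{-}(\Sigma)$, where a point lies in $D^{+}(\Sigma)$ (resp.\ $D^{-}(\Sigma)$) iff every past-inextendible (resp.\ future-inextendible) causal curve through it meets $\Sigma$. The endpoints $r=p$ and $r=q$ lie in $D(\Sigma)$ by hypothesis, so fix $r=\lambda(t)$ with $0<t<1$; since the ambient spacetime contains no closed causal curves, $p$, $r$, $q$ are then distinct and strictly causally ordered, $p\prec r\prec q$. Suppose, for contradiction, that $r\notin D(\Sigma)$. Then $r\notin D^{+}(\Sigma)$ furnishes a past-inextendible causal curve $\mu$ through $r$ with $\mu\cap\Sigma=\emptyset$, and $r\notin D^{-}(\Sigma)$ furnishes a future-inextendible causal curve $\nu$ through $r$ with $\nu\cap\Sigma=\emptyset$.

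The key move is to graft these curves onto $\lambda$. Concatenating the past portion of $\mu$ (ending at $r$) with $\lambda|_{[t,1]}$ yields a past-inextendible causal curve whose future endpoint is $q$; concatenating $\lambda|_{[0,t]}$ with the future portion of $\nu$ (beginning at $r$) yields a future-inextendible causal curve with past endpoint $p$. Now I invoke $q\in D(\Sigma)$: if $q\in D^{+}(\Sigma)$ the first grafted curve must meet $\Sigma$, and since the $\mu$-portion avoids $\Sigma$ this forces a point $s\in\Sigma$ on $\lambda|_{(t,1]}$; if instead $q\in D^{-}(\Sigma)$ then $q$ lies to the causal past of $\Sigma$ (the standard inclusion $D^{-}(\Sigma)\subseteq J^{-}(\Sigma)$), again producing $s\in\Sigma$ with $q\preceq s$. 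Either way $r\prec s$ with $s\in\Sigma$. Symmetrically, using $p\in D(\Sigma)$ with the second grafted curve (or $D^{+}(\Sigma)\subseteq J^{+}(\Sigma)$), I obtain $s'\in\Sigma$ with $s'\prec r$. Because there is a strict relation at $r$ in the resulting chain, $s'\prec s$ and in particular $s'\neq s$, and splicing the relevant causal segments gives a causal curve from $s'$ to $s$ passing through $r$.

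This is the contradiction: $s'$ and $s$ are distinct points of $\Sigma$ joined by a causal curve, which is incompatible with $\Sigma$ being achronal (the connecting curve, unless it is a single null geodesic, timelike-relates $s'$ and $s$; and for the spacelike leaves considered in this paper distinct points of $\Sigma$ are never causally related at all). Hence $r\in D(\Sigma)$. The only real bookkeeping is the four-way case split on whether $p$ and $q$ sit in $D^{+}(\Sigma)$ or $D^{-}(\Sigma)$; I expect the mildly ``misaligned'' cases ($q\in D^{-}(\Sigma)$ or $p\in D^{+}(\Sigma)$) to be the fiddliest, since there one leans on the inclusions $D^{\pm}(\Sigma)\subseteq J^{\pm}(\Sigma)$ rather than on a grafted curve directly, but all four cases close through the same final contradiction once those inclusions are recorded.
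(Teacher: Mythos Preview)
Your argument is more elaborate than the paper's and ends at a different contradiction. The paper never invokes achronality of $\Sigma$: from $r\notin D(\Sigma)$ it takes a single two-sided inextendible curve $\lambda'$ through $r$ with $\lambda'\cap\Sigma=\emptyset$, grafts $\lambda|_{[0,t]}$ onto the future half of $\lambda'$, and (after a ``without loss of generality, $r$ lies to the past of $\Sigma$'' to cover the symmetric case) asserts that the result is an inextendible curve through $p$ still missing $\Sigma$, directly contradicting $p\in D(\Sigma)$. There is no $D^{\pm}$ case split and no pair $s',s\in\Sigma$ to compare. What your route buys is transparency about exactly which hypothesis is used where; what it costs is the four-way case analysis and the null-geodesic caveat at the end.

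That caveat is a genuine gap relative to the lemma as stated. Achronality forbids only \emph{timelike} relations, so exhibiting distinct $s',s\in\Sigma$ joined by a causal curve through $r$ is not yet a contradiction when that curve happens to be an unbroken null geodesic. Your fallback (``for the spacelike leaves considered in this paper'') amounts to strengthening the hypothesis from achronal to acausal. That is harmless for the applications in this paper, but it is extra input not present in the lemma; closing the null case without it requires additional work, and the paper's one-line contradiction with $p\in D(\Sigma)$ is precisely designed to sidestep that bookkeeping.
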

\begin{proof}
Suppose such a point $r$ does not belong to $D(\Sigma)$.  Then, there must 
exist an inextendible causal curve $\lambda'$ that passes through $r$ and 
does not intersect $\Sigma$.  Without loss of generality, assume that $r$ 
lies to the past of $\Sigma$.  Consider a causal curve composed of $\lambda$ 
to the past of $r$ and $\lambda'$ to the future of $r$.  This would then 
be an inextendible causal curve passing through $p$ but not intersecting 
$\Sigma$, implying that $p$ does not belong to $D(\Sigma)$, thus 
contradicting the assumption.
\end{proof}
\begin{lemma}
Let $R$ be a closed set such that every causal curve connecting two points 
in $R$ lies entirely in $R$.  Let $\Sigma$ be the future boundary of $R$ 
defined by points $p \in R$ such that $\exists$ a timelike curve $\lambda$ 
passing through $p$ that does not intersect $R$ anywhere in the future. 
Then,
\begin{enumerate}
  \item $\Sigma$ is an achronal set.
  \item $R \subseteq D(\Sigma)$.
\end{enumerate}
\label{lem:2}
\end{lemma}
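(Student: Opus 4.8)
The plan is to route everything through one lemma: \emph{if $y\in R$ and some future-directed causal curve $c$ issuing from $y$ satisfies $c\setminus\{y\}\subseteq M\setminus R$, then $I^{+}(y)\cap R=\emptyset$}, and consequently every timelike curve through $y$ has its future portion outside $R$, so $y\in\Sigma$. I would prove this lemma by contradiction. Assuming some $q\in I^{+}(y)\cap R$ exists, note that $I^{-}(q)$ is open and contains $y$, so the point $y':=c(\epsilon)$ lies in $I^{-}(q)$ once $\epsilon>0$ is small enough, while $y'\notin R$ by hypothesis; fixing such an $\epsilon$, pick a timelike curve $\rho$ from $y'$ to $q$. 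Since $R$ is closed, $\rho$ starts outside $R$ and ends at $q\in R$, so it meets $R$ for a first time at some point $q_{1}$. Splicing the piece of $c$ from $y$ to $y'$ onto the piece of $\rho$ from $y'$ to $q_{1}$ produces a future-directed causal curve whose endpoints $y,q_{1}$ both lie in $R$ but which passes through the point $y'\notin R$, contradicting the defining property of $R$. The point of the lemma is that it converts a mere \emph{causal} escape from $R$ into the much stronger statement that \emph{every timelike} curve escapes---which is exactly what the definition of $\Sigma$ demands.

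Granting the lemma, Part~1 is immediate: for $p\in\Sigma$, taking $c$ to be the witnessing timelike curve gives $I^{+}(p)\cap R=\emptyset$, and the time-reverse of the same argument gives $I^{-}(p)\cap R=\emptyset$; since $\Sigma\subseteq R$, no two points of $\Sigma$ can be timelike related, i.e.\ $\Sigma$ is achronal. For Part~2, I would fix $x\in R$ and an inextendible causal curve $\gamma$ through $x$, and use causal convexity to see that $\{s:\gamma(s)\in R\}$ is a closed interval: if $\gamma(s_{1}),\gamma(s_{2})\in R$ then $\gamma|_{[s_{1},s_{2}]}$ joins two points of $R$, hence lies in $R$. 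For the sets $R$ that occur in the construction---finite intersections of entanglement wedges, i.e.\ of domains of dependence of compact achronal sets---$\gamma$ cannot stay in $R$ toward the future, so this interval has a greatest element $s_{+}$, with $\gamma(s_{+})\in R$ and $\gamma(s)\notin R$ for all $s>s_{+}$. Applying the lemma with $y=\gamma(s_{+})$ and $c=\gamma|_{[s_{+},\,\cdot\,)}$ gives $\gamma(s_{+})\in\Sigma$; since $\gamma$ was arbitrary, every future-inextendible causal curve from $x$ meets $\Sigma$, so $x\in D(\Sigma)$.

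The step I expect to be the real obstacle is reconciling the two notions of curve in play: the trajectory by which $\gamma$ leaves $R$ is only causal, whereas $\Sigma$ is defined with timelike curves, and near a null segment of $\gamma$ these are genuinely inequivalent; the splicing lemma is precisely what bridges the gap, and it leans entirely on the causal-convexity hypothesis. A secondary point needing care is the \emph{existence} of the exit parameter $s_{+}$---that $\gamma$ leaves $R$ toward the future at all---which can fail for an arbitrary causally convex closed set (for instance $R=J^{-}$ of a future-inextendible worldline, whose future boundary is disjoint from that worldline), and so must be extracted from the concrete description of $R$ as an intersection of domains of dependence of compact achronal sets. Finally, one should record the routine but necessary check that the spliced curve genuinely qualifies as a causal curve joining two points of $R$, so that the hypothesis on $R$ can be invoked.
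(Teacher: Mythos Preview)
Your argument is correct, with one harmless slip: the claim that ``the time-reverse of the same argument gives $I^{-}(p)\cap R=\emptyset$'' is false in general (a point on the future boundary of $R$ will typically have plenty of $R$ in its past), but you do not need it.  Once $I^{+}(p)\cap R=\emptyset$ holds for every $p\in\Sigma$, achronality is immediate: if $p,q\in\Sigma$ with $q\in I^{+}(p)$, then $q\in\Sigma\subseteq R$ already contradicts $I^{+}(p)\cap R=\emptyset$.

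Compared to the paper: for Part~1 the paper's argument is essentially a specialization of your splicing lemma to the case where both endpoints lie in $\Sigma$, so there is no real difference.  For Part~2 the paper takes a less direct route.  It first argues that $I_{+}(p)$ must meet $\Sigma$ (via the boundary point of a curve leaving $R$, exactly your idea), and then invokes a dichotomy---``$I_{+}(p)$ either intersects $\Sigma$ everywhere in the interior of $\Sigma$ or intersects some portion of the boundary of $\Sigma$''---ruling out the second alternative by a separate contradiction.  Your version bypasses this entirely: you work directly with an arbitrary inextendible causal curve through $x$, locate its last point $\gamma(s_{+})$ in $R$, and use the lemma to promote the merely causal exit to membership in $\Sigma$.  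This is cleaner and makes the role of causal convexity explicit; what it buys is that you never have to reason about how $I_{+}(p)$ sits inside $\Sigma$ globally.  You are also right to flag the existence of $s_{+}$ as requiring input from the specific form of $R$; the paper makes the identical assumption (``the boundary point of $\lambda\cap R$'') without comment.
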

\begin{proof}
We first show that $\Sigma$ is an achronal set.  Suppose there exist two 
points $p$ and $q$ in $\Sigma$ that were timelike related.  Without loss 
of generality, assume that $p$ lies to the past of $q$.  Consider an open 
neighborhood of $p$ denoted by $U(p)$.  Consider a point $r$ such that 
$r \in \left\{ I_+(p) \cap U(p) \right\} \backslash R$.  By continuity, 
$\exists$ a timelike curve $\lambda$ connecting $r$ to $q$.  $\lambda$ can 
then be extended to pass through $p$ in the past.  Thus, we have found a 
causal curve that connects points $p$ and $q$, both of which belong to $R$, 
and passes through $r \notin R$.  This contradicts the assumption, and hence, 
$\Sigma$ must be an achronal set.

Now, we can show that $R \subseteq D(\Sigma)$.  Consider a point $p$ such 
that $p \in R\, \backslash \Sigma$. Then, $I_+(p)$ must intersect $\Sigma$. 
To show this, suppose it were not true and consider a future causal curve 
$\lambda$ from $p$ which does not intersect $\Sigma$.  The boundary point 
of $\lambda \cap R$ then also has a timelike curve through it which does not 
intersect $R$ anywhere in the future, and thus should be included in the set 
$\Sigma$.  Therefore, $I_+(p)$ either intersects $\Sigma$ everywhere in the 
interior of $\Sigma$ or intersects some portion of the boundary of $\Sigma$. 
In the first case, all inextendible causal curves through $p$ necessarily 
pass through $\Sigma$, and hence $p \in D(\Sigma)$.  In the second case, 
extend $\Sigma$ in a spacelike manner to an open neighborhood around $\Sigma$ 
where a point $q$ outside $\Sigma$ is timelike related to $p$.  Causal curves 
from $p$ to $q$ would not intersect $\Sigma$ since $q$ is spacelike related 
to all points on $\Sigma$.  Consider the intersection of this curve with $R$. 
It must have a boundary point which does not belong to $\Sigma$.  This point 
would then have inextendible timelike curves through it that do not intersect 
$R$ in the future.  This contradicts the assumption that this point was not 
in $\Sigma$. This implies that the second case is impossible.  Hence, we have 
proved that $R \subseteq D(\Sigma)$.
\end{proof}
\begin{theorem}
Consider two \mbox{codimension-1} spacelike subregions $\Sigma_1$ 
and $\Sigma_2$ that are compact.  Let their domains of dependence be 
$D(\Sigma_1) = D_1$ and $D(\Sigma_2) = D_2$.  Then $D = D_1 \cap D_2$ 
is the domain of dependence of the future boundary of $D$ denoted 
by $\Sigma$.
\end{theorem}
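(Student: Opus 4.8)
The plan is to show the two set inclusions $D \subseteq D(\Sigma)$ and $D(\Sigma) \subseteq D$ separately, using Lemma~\ref{lem:2} as the main engine for the first and a direct causal-curve argument for the second. The key observation is that $D = D_1 \cap D_2$ satisfies the hypothesis of Lemma~\ref{lem:2}: it is closed (as an intersection of closed sets, since domains of dependence of compact spacelike slices are closed), and any causal curve connecting two points of $D$ lies entirely in $D$---this is immediate from Lemma~\ref{lem:1} applied to each of $D_1$ and $D_2$ in turn. Hence Lemma~\ref{lem:2} applies directly, giving us that $\Sigma$, defined as the future boundary of $D$, is achronal and that $D \subseteq D(\Sigma)$. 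This handles one direction essentially for free.

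For the reverse inclusion $D(\Sigma) \subseteq D_1 \cap D_2$, I would argue by contradiction: suppose $p \in D(\Sigma)$ but $p \notin D_1$ (the $D_2$ case being symmetric). Then there is an inextendible causal curve $\mu$ through $p$ that misses $\Sigma_1$. The idea is to follow $\mu$ away from $p$ in the direction that escapes $D_1$ and show it must also escape $D(\Sigma)$, contradicting $p \in D(\Sigma)$. Concretely, since $p \notin D_1$, one direction of $\mu$---say the future direction---eventually leaves $D_1$ and never returns (one must be a little careful here: $p$ might be in the past of $\Sigma_1$ or have $\mu$ exit through the edge of $\Sigma_1$, so I would split on whether $\mu$ exits $\bar{D}_1$ to the future or past). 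Once $\mu$ has left $\bar{D}_1$, it has left $D = D_1 \cap D_2$, and since $\Sigma$ is the future boundary of $D$, the portion of $\mu$ outside $D$ to the future of its last intersection with $D$ cannot meet $\Sigma$. Splicing this tail onto $\mu$ gives an inextendible causal curve through $p$ avoiding $\Sigma$, so $p \notin D(\Sigma)$---contradiction.

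The main obstacle I anticipate is the bookkeeping in the second inclusion around the \emph{edges} of the slices and the orientation of the escaping curve. The future boundary $\Sigma$ captures only the ``top'' of $D$, so if $p$ fails to be in $D_1$ because some causal curve escapes $D_1$ \emph{to the past}, I need to ensure that this still forces $p \notin D(\Sigma)$; this should work because $p \in D(\Sigma)$ already forces, via $D \subseteq D(\Sigma)$ and Lemma~\ref{lem:1}, that the past-directed inextendible extension of any curve through $p$ also threads $\Sigma$ or exits $D$---but making this airtight requires carefully tracking where a curve can cross $\partial D$ versus $\Sigma$. A secondary subtlety is confirming that $\Sigma$ as defined (future boundary of $D$) is exactly the achronal set whose domain of dependence we want, rather than some proper achronal subset of it; here I would lean on the fact that $D \subseteq D(\Sigma)$ already shows $D(\Sigma)$ is at least as big as $D$, so only the reverse containment is substantive. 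Once both inclusions are in hand, $D = D(\Sigma)$ follows, and since $\Sigma$ is achronal by Lemma~\ref{lem:2}, it is a legitimate achronal set with the claimed domain of dependence.
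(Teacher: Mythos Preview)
Your first inclusion $D \subseteq D(\Sigma)$ is exactly the paper's argument: apply Lemma~\ref{lem:1} to each of $D_1$ and $D_2$ to verify the causal-convexity hypothesis of Lemma~\ref{lem:2}, then invoke Lemma~\ref{lem:2} directly.

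For the reverse inclusion your contradiction setup is fine, but the proposed execution---tracking where $\mu$ exits $\bar D_1$ and arguing about the future tail---is more work than needed and, as you yourself flag, not clearly airtight. The tail argument only controls $\mu$ \emph{after} it leaves $D$; since $\Sigma \subseteq D$, nothing you wrote rules out $\mu$ meeting $\Sigma$ \emph{before} exiting, and the past/future case split you anticipate is a symptom of this gap rather than a fix for it. The paper sidesteps all of this with one observation you never use: $\Sigma \subseteq D \subseteq D_1 \cap D_2$. Given $p \in D(\Sigma)$, every inextendible causal curve through $p$ meets $\Sigma$, hence passes through a point of $D_1$; but an inextendible causal curve through any point of $D_1$ must intersect $\Sigma_1$ by the definition of domain of dependence. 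The same holds for $\Sigma_2$, so $p \in D_1 \cap D_2 = D$. In your contradiction language: the curve $\mu$ you chose to miss $\Sigma_1$ can never touch $\Sigma$ at all, because hitting $\Sigma$ would put $\mu$ through a point of $D_1$ and force it to meet $\Sigma_1$. No exit-tracking, no edge analysis, no orientation cases.
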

\begin{proof}
Consider any two points $p$ and $q$ that belong to $D$.  Both $p$ and $q$ 
belong to $D_1$ and $D_2$.  Using Lemma~\ref{lem:1}, we can conclude that 
all points on a causal curve joining $p$ and $q$ belong to both $D_1$ and 
$D_2$.  Hence, any such point also belongs to $D$.  Thus, $D$ satisfies the 
condition required for $R$ above in Lemma~\ref{lem:2}. Using Lemma~\ref{lem:2} 
then tells us that $D \subseteq D(\Sigma)$.

Since $\Sigma$ is defined to be the future boundary of $D$, $\Sigma$ itself 
is necessarily contained in $D$.  Now consider any $p \in D(\Sigma)$.  Any 
causal curve $\lambda$ passing through $p$ intersects $\Sigma$ by definition. 
However, since $\Sigma \subseteq D$, all inextendible causal curves through 
$\Sigma$ necessarily intersect both $\Sigma_1$ and $\Sigma_2$.  Thus, all 
inextendible causal curves through $p$ also pass through both $\Sigma_1$ 
and $\Sigma_2$.  This implies $D(\Sigma) \subseteq D$.

Combining the above two results, we have shown that $D = D(\Sigma)$.  Namely, 
the intersection of two domains of dependence is also a domain of dependence.
\end{proof}

\section{Uniqueness of the Holographic Slice}
\label{app:uniqueness}

Consider a \mbox{codimension-2}, closed, achronal surface $\sigma$ in an 
arbitrary $(d+1)$-dimensional spacetime $M$.  Suppose $\sigma$ is a convex 
boundary.  We assume that both $M$ and $\sigma$ are sufficiently smooth 
so that variations in the spacetime metric $g_{\mu\nu}$ and induced metric 
on $\sigma$, denoted by $h_{ij}$, occur on characteristic length scales 
$L$ and $L_\sigma$, respectively.

\begin{theorem}
Consider subregion $R$ of characteristic length $\delta \ll L, L_\sigma$ 
on $\sigma$.  To leading order, the extremal surface anchored to 
$\partial R$ lives on the hypersurface generated by the vector 
$s = \theta_t t - \theta_z z$ normal to $\sigma$.  Here, $t$ and $z$ 
are orthonormal timelike and spacelike vectors perpendicular to $\sigma$, 
and $\theta_t = h^{ij} K^t_{ij}$ and $\theta_z = h^{ij} K^z_{ij}$ where 
$K^t_{ij}$ and $K^z_{ij}$ are the extrinsic curvature tensors of $\sigma$ 
for $t$ and $z$, respectively.  This property is independent of the 
shape of $R$.
\label{thm:hypersurface}
\end{theorem}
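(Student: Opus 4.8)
The plan is to set up local coordinates adapted to $\sigma$ near the small subregion $R$, Taylor-expand the area functional to the order at which the leading behavior of the extremal surface is fixed, and show that the minimizing solution is confined to the hypersurface swept by $s = \theta_t t - \theta_z z$.

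\medskip

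First I would introduce Fermi-type normal coordinates in a neighborhood of a point $p \in R$: tangential coordinates $y^i$ ($i = 1,\dots,d-1$) along $\sigma$, together with the two normal coordinates associated with the orthonormal frame $(t, z)$. Since $\delta \ll L, L_\sigma$, both the ambient curvature and the extrinsic curvature of $\sigma$ are approximately constant over $R$, so to leading order the geometry near $R$ is flat space with $\sigma$ sitting inside it as a surface whose second fundamental form in the $t$ and $z$ directions is encoded by the constant tensors $K^t_{ij}$ and $K^z_{ij}$. The extremal surface $\gamma(\partial R)$ is anchored on $\partial R \subset \sigma$ and, being a small perturbation, can be written as a graph: its two normal displacements $T(y)$ and $Z(y)$ are functions on the tangential disk bounded by $\partial R$, with $T = Z = 0$ on $\partial R$ (matching $\sigma$). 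The displacements are small — parametrically of order $\delta^2 \times (\text{curvature scale})$ — which is what justifies truncating the expansion.

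\medskip

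Next I would expand the area functional $\mathcal{A}[T,Z] = \int \sqrt{\det(\text{induced metric})}\, d^{d-1}y$ to quadratic order in $T, Z$ and their gradients. The key structural point is the Lorentzian signature: the induced metric picks up $-(\partial T)^2 + (\partial Z)^2$ from the normal displacements, plus linear-in-$T,Z$ terms that come from the extrinsic curvature of $\sigma$ — schematically a term proportional to $\theta_t T - \theta_z Z$ coming from the trace of the second fundamental form contracted with the displacement. Extremizing (not minimizing — this is HRRT) gives coupled Euler–Lagrange equations of the form $\Box T \propto \theta_t$, $\Box Z \propto \theta_z$ (with opposite signs on the source from signature), on the tangential disk with zero boundary data. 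Because the sources are constant over $R$, the solutions $T(y)$ and $Z(y)$ are proportional to the \emph{same} function of $y$ — namely the solution of $\Box(\,\cdot\,) = 1$ on the disk vanishing on the boundary — with coefficients $\theta_t$ and $-\theta_z$ respectively. Hence at every point the displacement vector is $(T, Z) \propto (\theta_t, -\theta_z)$, i.e.\ it points along $s = \theta_t t - \theta_z z$, so $\gamma(\partial R)$ lies on the hypersurface generated by $s$. Crucially, the common scalar function depends on the shape of the disk but the \emph{direction} $(\theta_t, -\theta_z)$ does not, which is exactly the claimed shape-independence.

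\medskip

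The main obstacle I anticipate is bookkeeping the expansion carefully enough to be sure no omitted term competes with the ones retained — in particular confirming that the ambient-curvature corrections and the higher extrinsic-curvature corrections are genuinely subleading (suppressed by further powers of $\delta/L$ or $\delta/L_\sigma$) and therefore cannot tilt the displacement off the $s$-direction at leading order. A secondary subtlety is justifying the graph representation and the well-posedness of the extremization with the Lorentzian-signature normal directions; one should check that the relevant linearized operator is the flat Laplacian on the tangential disk (which is elliptic and well-behaved) and that the time and space normal directions decouple at this order except through their common source structure. Once those estimates are in hand, reading off $s = \theta_t t - \theta_z z$ and its shape-independence is immediate, and the expressions $\theta_t = h^{ij}K^t_{ij}$, $\theta_z = h^{ij}K^z_{ij}$ are just the definitions of the relevant expansions that appear as the constant sources.
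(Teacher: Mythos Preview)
Your proposal is correct and follows essentially the same route as the paper: set up locally flat coordinates near $R$, reduce the extremal-surface condition for the two normal displacements to a Poisson equation with constant sources $\theta_t$ and $-\theta_z$ and vanishing boundary data, and conclude that the displacement is everywhere along $s = \theta_t t - \theta_z z$ independently of the shape of $R$. The only cosmetic difference is that the paper works directly with the extremal-surface PDE in Riemann normal coordinates (so the leaf is a quadratic graph and the subtraction $t_E - t_L$, $z_E - z_L$ produces the Poisson sources), whereas you expand the area functional in Fermi coordinates adapted to $\sigma$ (so the extrinsic curvature enters as the linear source term); these are equivalent parametrizations of the same computation.
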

\begin{proof}
Start from a point $p \in R$ and set up Riemann normal coordinates in 
the local neighborhood of $p$.
\begin{equation}
  g_{\mu\nu}(x) = \eta_{\mu\nu} 
    - \frac{1}{3} R_{\mu\rho\nu\sigma} x^\rho x^\sigma + O(x^3).
\label{eq:Riemann-normal}
\end{equation}
In these coordinates, we are considering a patch of size $\delta$ around 
the origin $p$ with $R_{\mu\rho\nu\sigma} \sim O(1/L^2)$.  Equivalently, 
we could consider a conformally rescaled metric
\begin{align}
  x^{\mu} &= \epsilon y^{\mu},
\label{eq:rescale-1}\\
  ds^2 &= \epsilon^2 g_{\mu\nu}(\epsilon y)\, dy^\mu dy^\nu,
\label{eq:rescale-2}\\
  d\tilde{s}^2 &= g_{\mu\nu}(\epsilon y)\, dy^\mu dy^\nu 
\nonumber\\
  &= \tilde{g}_{\mu\nu}(y)\, dy^\mu dy^\nu,
\label{eq:rescale-3}
\end{align}
where $\epsilon = \delta/L \ll 1$.

In this alternate way of viewing the problem, we have a patch of size $L$ 
with the metric varying on a larger length scale $L/\epsilon$.  In these 
coordinates, each derivative of the conformal metric brings out an extra 
power of $\epsilon$; for example,
\begin{equation}
  \frac{\partial^2}{\partial y^\rho \partial y^\sigma} \tilde{g}_{\mu\nu} 
  = \epsilon^2 \frac{\partial^2}{\partial x^\rho \partial x^\sigma} g_{\mu\nu} 
  \sim \frac{\epsilon^2}{L^2}.
\label{eq:metric-deriv}
\end{equation}
The connection coefficients $\Gamma^\mu_{\rho\sigma}$ vanish at $p$ due 
to our choice of Riemann normal coordinates.  This implies that for points 
in the neighborhood of $p$, we can Taylor expand to find
\begin{align}
  \Gamma^\mu_{\rho\sigma} &\sim \frac{\epsilon^2}{L},
\label{eq:connection}\\
  R_{\mu\rho\nu\sigma} &\sim \frac{\epsilon^2}{L^2}.
\label{eq:Ricci_tensor}
\end{align}
Note that these quantities are obtained using the rescaled metric 
$\tilde{g}_{\mu\nu}$ in the $y^\mu$ coordinates.

Since there is still a remaining $SO\left(d,1\right)$ symmetry that 
preserves the Riemann normal coordinate form of the metric, we can 
use these local Lorentz boosts and rotations to set $t$ and $z$ as 
the coordinates in the normal direction to $\sigma$ at $p$ while $y^i$ 
parameterize the tangential directions.  This is a convenient choice 
to solve the extremal surface equation in a perturbation series order 
by order.  The extremal surface equation is given by~\cite{Hubeny:2007xt}
\begin{equation}
  \tilde{g}^{\rho\sigma} \Bigl( \partial_\rho \partial_\sigma Y^\mu 
    + \Gamma^\mu_{\lambda\eta} \partial_\rho Y^\lambda \partial_\sigma Y^\eta 
    - \Gamma^\lambda_{\rho\sigma} \partial_\lambda Y^\mu \Bigr) = 0.
\label{eq:extremal-eq}
\end{equation}
This is a set of $d+1$ equations for the embedding of the extremal surface 
$Y^\mu$, which are functions of $d-1$ independent coordinates.  The 
equations in the tangential directions are trivially satisfied by taking 
the $d-1$ parameters to be $y^i$.  This leaves only two equations in 
the normal directions to be solved.

From the discussion above, when restricted to the local patch of size 
$L$, we have
\begin{align}
  \tilde{g}^{\mu\nu} &= \eta^{\mu\nu} + O(\epsilon^2),
\label{eq:local-q}\\
  \Gamma^\mu_{\rho\sigma} &= O\bigg(\frac{\epsilon^2}{L}\biggr).
\label{eq:local-connec}
\end{align}
Assuming the extremal surface is smooth, derivatives of $Y^\mu$ typically 
bring down a power of $L_\sigma$.  Thus,
\begin{align}
  \partial Y   &\sim O(1),
\label{eq:deriv-1}\\
  \partial^2 Y &\sim O\biggl(\frac{1}{L_\sigma}\biggr).
\label{eq:deriv-2}
\end{align}
Using this, at the leading order in $\epsilon$ and $\epsilon_\sigma = 
\delta/L_\sigma$, the extremal surface equations simply become
\begin{equation}
  \delta^{ij} \partial_i \partial_j Y^\mu = 0,
\label{eq:ESE-1}
\end{equation}
where $\mu$ takes the $t$ and $z$ directions.  We write these as
\begin{equation}
  \nabla^2 t_{\rm E} = \nabla^2 z_{\rm E} = 0,
\label{eq:ESE-2}
\end{equation}
where $t_{\rm E}$ and $z_{\rm E}$ are functions of $y^i$.

Let $K^t_{ij}$, $K^z_{ij}$ denote the extrinsic curvature tensors for the 
$t$ and $z$ normals, respectively.  Following the above scaling arguments, 
$K^t_{ij},\, K^z_{ij} \sim \epsilon/L_\sigma$.  Here, we have assumed 
that $L_\sigma \lesssim L$, although this is not essential for the final 
result.  Because $t$ and $z$ are normal to the leaf, the equations for 
the leaf, described by $t_{\rm L}(y^i)$ and $z_{\rm L}(y^i)$, can be 
Taylor expanded in the neighborhood $R$ as
\begin{align}
  t_{\rm L}(y^i) &= -\frac{1}{2} K^t_{ij} y^i y^j 
    + O\biggl( \frac{\epsilon^2 y^3}{L_\sigma^2} \biggr),
\label{eq:leaf-R-1}\\
  z_{\rm L}(y^i) &= \frac{1}{2} K^z_{ij} y^i y^j 
    + O\biggl( \frac{\epsilon^2 y^3}{L_\sigma^2} \biggr),
\label{eq:leaf-R-2}
\end{align}
where the negative sign in the first line is due to the timelike signature 
of the $t$ normal.  The boundary conditions for the extremal surface 
equation are
\begin{align}
  t_{\rm E}(\partial R) &= t_{\rm L}(\partial R),
\label{eq:ext-bc-1}\\
  z_{\rm E}(\partial R) &= z_{\rm L}(\partial R).
\label{eq:ext-bc-2}
\end{align}
Now, consider $\delta t = t_{\rm E} - t_{\rm L}$ and $\delta z = z_{\rm E} 
- z_{\rm L}$.  The extremal surface equations are then given by
\begin{align}
  \nabla^2\, \delta t &= -\nabla^2 t_{\rm L} 
  = \theta_t \left\{ 1 + O(\epsilon_\sigma) \right\},
\label{eq:delta_t}\\
  \nabla^2\, \delta z &= -\nabla^2 z_{\rm L} 
  = -\theta_z \left\{ 1 + O(\epsilon_\sigma) \right\},
\label{eq:delta_z}
\end{align}
where $\theta_t = h^{ij} K^t_{ij}$ and $\theta_z = h^{ij} K^z_{ij}$. 
Note that $h_{ij} = \eta_{ij}$ at this order.  The boundary conditions 
are given by
\begin{equation}
  \delta t\left(\partial R\right) = \delta z\left(\partial R\right) = 0.
\label{eq:delta-bc}
\end{equation}

It is now clear that at leading order $\delta t/\theta_t$ and 
$-\delta z/\theta_z$ satisfy the same equation with the same boundary 
conditions.  Thus,
\begin{equation}
  \frac{\delta t}{\delta z} = -\frac{\theta_t}{\theta_z} + O(\epsilon_\sigma),
\label{eq:on-plane}
\end{equation}
for all points on the extremal surface.  Rewritten, the extremal surface 
lives on the hypersurface generated by $s = \theta_t t - \theta_z z$, 
orthogonal to $\sigma$.

This result is independent of the explicit shape of subregion $R$.
\end{proof}

Theorem~\ref{thm:hypersurface} essentially brings us to the uniqueness 
of the holographic slice.  The new surface, $\sigma'$, is generated by 
a convolution of the ``deepest'' points on each $\gamma(R)$.  Considering 
balanced shapes such that the ``deepest'' point corresponds to $y^i = 0$, 
$\delta t/\delta z$ has the interpretation of the slope of the evolution 
vector $s$ from $p$ which takes it to the new leaf $\sigma'$.  Slight 
imbalances in the shape would only affect the slope at subleading order 
in $\epsilon, \epsilon_\sigma$ and thus, the slope of $s$ is determined 
in a shape independent manner in the limit $\epsilon, \epsilon_\sigma 
\rightarrow 0$.  In order to move to $\sigma'$, we must also specify 
the distance, $\delta\lambda(p)$, by which we move along $s$ at each step. 
If the size of $C(p)$ is homogeneous across $\sigma$, then $\delta\lambda(p)$ 
is independent of $p$ to leading order.  Thus, the new leaf $\sigma'$ 
obtained at each stage is unique up to small error terms.  Following 
a similar procedure at each stage, e.g.\ by choosing random uncorrelated 
shapes of size $\delta'$ (found by mapping length $\delta$ to $\sigma'$ 
by $s$) for subregions $C'(p)$ at each point $p$, ensures that the error 
terms do not add up coherently.  This implies that the holographic slice 
is obtained by following the integral curves of the evolution vector $s$ 
starting from each point $p \in \sigma$, and hence is unique.
\begin{corollary}
Construct a holographic slice such that $C^i(p)$ is homogeneous and 
uncorrelated with $C^j(p)$, $j \neq i$.  Let the sizes of $C^i(p)$ be 
determined by mapping the characteristic length, $\delta$, of $C(p)$ on 
$\sigma$ to $\sigma^i$ by $s$.  The continuum version (sending $\delta 
\rightarrow 0$) of all such slices are identical.
\end{corollary}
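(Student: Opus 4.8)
The plan is to show that the holographic slice is determined, to leading order at every renormalization step, purely by the evolution vector $s$ of Eq.~(\ref{eq:s-vec}), so that any homogeneous isotropic choice of coarse-graining shapes converges to the same continuum object. The argument rests almost entirely on Theorem~\ref{thm:hypersurface}: given any subregion $R$ of characteristic size $\delta\ll L,L_\sigma$ on a leaf $\sigma$, the anchored extremal surface $\gamma(\partial R)$ lies on the hypersurface generated by $s=\theta_t t-\theta_z z$, and this is \emph{independent of the shape of $R$}. First I would use this to pin down the direction of flow: since the renormalized leaf $\sigma'$ is built by convolving the ``deepest'' points of $\gamma(\partial R)$ over all $R$, and each such deepest point lies along $s$ from the corresponding anchor point $p$, the tangent direction to $\Upsilon$ at $p$ is $s(p)$, with only $O(\epsilon,\epsilon_\sigma)$ shape-dependent corrections.

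Next I would address the magnitude of each step, i.e.\ how far along $s$ the leaf moves. The displacement $\delta\lambda(p)$ is fixed by the size of $C(p)$; homogeneity of $C(p)$ across $\sigma$ (all shapes having the same characteristic length $\delta$) forces $\delta\lambda(p)$ to be $p$-independent at leading order, and isotropy removes any directional bias in which the ``deepest'' point need not sit at $y^i=0$. Slight imbalances in shape perturb the deepest point, and hence $\delta\lambda$ and the slope, only at subleading order in $\epsilon,\epsilon_\sigma$. I would then invoke the iterative structure: the size $\delta^i$ of $C^i(p)$ is obtained by transporting $\delta$ to $\sigma^i$ along $s$, so the construction is self-similar scale-by-scale, and Theorem~\ref{thm:hypersurface} applies verbatim on each renormalized leaf (using that $\theta_k\le 0$, $\theta_l\ge0$ is preserved, per Appendix~\ref{app:convexity}, so the $s$ vector and the extremal-surface-barrier property continue to hold).

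The heart of the argument---and the main obstacle---is controlling the accumulation of errors as the number of steps $N\sim 1/\delta$ diverges in the continuum limit. A naive bound gives $O(N\times\epsilon_\sigma)=O(1/\delta\times\delta/L_\sigma)=O(1/L_\sigma)$ per unit $\lambda$, which does not vanish. The resolution, which I would make precise, is that by choosing the shapes $C^i(p)$ to be \emph{uncorrelated} from step to step (e.g.\ random orientations), the leading shape-dependent deviations at successive steps are independent zero-mean fluctuations rather than coherent biases; they therefore add in quadrature, contributing $O(\sqrt{N}\,\epsilon_\sigma^{\,2})\to0$, rather than linearly. I would need to argue that the $O(\epsilon,\epsilon_\sigma)$ corrections genuinely average to zero under the homogeneity/isotropy hypothesis---this is where the ``balanced shapes'' discussion preceding the corollary does the work---and that the flow map generated by $s$ depends continuously (indeed Lipschitz) on its initial data, so that vanishing pointwise error in the vector field yields vanishing error in the integral curves on any finite $\lambda$-interval.

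Assembling these pieces: the direction of each infinitesimal step is $s$ up to vanishing corrections, the step size is common across the leaf up to vanishing corrections, the per-step errors are uncorrelated and hence do not build up, and the $s$-flow is stable under such perturbations; therefore every homogeneous, isotropic, step-wise-uncorrelated coarse-graining scheme produces, in the limit $\delta\to0$, the integral-curve flow of $s$ from $\sigma$---the unique preferred holographic slice. I would close by noting this is exactly the mean curvature flow identification made in Section~\ref{subsubsec:unique}, so the continuum limits coincide as claimed.
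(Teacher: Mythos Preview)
Your proposal is correct and follows essentially the same line as the paper. The paper does not give a separate proof of this corollary; it is stated as an immediate consequence of the discussion following Theorem~\ref{thm:hypersurface}, which invokes exactly the ingredients you list: shape-independence of the $s$-direction from the theorem, homogeneity to make $\delta\lambda(p)$ uniform across the leaf, and the uncorrelated-shapes hypothesis so that subleading errors ``do not add up coherently.'' Your treatment is somewhat more explicit than the paper's---you spell out the $\sqrt{N}$ random-walk scaling for the accumulated error and invoke Lipschitz stability of the $s$-flow, whereas the paper leaves these as implicit---but the logical skeleton is the same.
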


As an aside, there are certain interesting features that this analysis 
highlights.  Consider a generic leaf of a holographic screen $\sigma$ 
and the future directed orthogonal null vectors $k$ and $l$ normalized 
as $k \cdot l = -2$.  The $t$ and $z$ vectors are then given by
\begin{equation}
  t = \frac{1}{2}(k + l),
\qquad
  z = \frac{1}{2}(k - l).
\label{eq:t-z-vec}
\end{equation}
From the linearity of extrinsic curvature, this leads to
\begin{equation}
  \theta_t = \frac{1}{2} (\theta_k + \theta_l),
\qquad
  \theta_z = \frac{1}{2} (\theta_k - \theta_l).
\label{eq:theta_t-z}
\end{equation}
The evolution vector $s$ and its associated expansion $\theta_s$ are 
given by
\begin{equation}
  s = \theta_t t - \theta_z z 
  = \frac{1}{2} (\theta_k l + \theta_l k),
\label{eq:s-vec-2}
\end{equation}
and
\begin{equation}
  \theta_s = \theta_t^2 - \theta_z^2 = \theta_k \theta_l \leq 0,
\label{eq:theta_s-neg}
\end{equation}
respectively.

At the holographic screen, $\theta_k = 0$.  This leads to
\begin{equation}
  s \propto k,
\qquad
  \theta_s = 0.
\label{eq:s-theta_s-leaf}
\end{equation}
Namely, the initial evolution of the holographic slice from a 
non-renormalized leaf occurs in the $k$ direction with a non-expanding 
or contracting leaf area.

\section{Convexity of Renormalized Leaves}
\label{app:convexity}

\begin{definition*}
On a spacelike slice $\Sigma$, a compact set $S$ is defined to 
be convex if and only if $K_{\Sigma}(\partial S) \leq 0$, where 
$K_{\Sigma}(\partial S)$ is the trace of the extrinsic curvature of 
$\partial S$ embedded in $\Sigma$ for the normal pointing inward.
\label{lem:3}
\end{definition*}
\begin{definition*}
In a spacetime $M$, a \mbox{codimension-2} compact surface $\sigma$ is 
called a convex boundary if on every \mbox{codimension-1} spacelike slice 
$\Sigma$ such that $\sigma \subset \Sigma$, the closure of the interior 
of $\sigma$ is a convex set.
\end{definition*}
\begin{lemma}
Let us consider a sufficiently small codimension-2 region $A \subset 
D(\sigma)$. There is then an extremal surface $\gamma(A)$ anchored to 
$A$ which stays within $D(\sigma)$. A codimension-2 surface $\sigma$ 
is a convex boundary if and only if $\gamma(A)$ does not leave $D(\sigma)$ 
under any continuous deformation of $A$ within $D(\sigma)$; see 
Ref.~\cite{Sanches:2016sxy} for details.
\label{lem:3}
\end{lemma}
\begin{lemma}
$\sigma$ is a convex boundary if and only if the null expansions in the 
inward direction, i.e.\ $\theta_{k}$ and $\theta_{-l}$, are both non-positive.
\end{lemma}
\begin{proof}
An inward normal $n$ on a spacelike slice $\Sigma$ is given by a linear 
superposition of $k$ and $l$, i.e.\ $n = \alpha k - \beta l$ with some 
$\alpha,\beta \geq 0$.  If $\theta_k \leq 0$ and $\theta_l \geq 0$, then 
$K_{\Sigma}(\sigma) = \theta_n = \alpha\theta_k - \beta\theta_l \leq 0$ 
for all choices of $\alpha, \beta \geq 0$.  Thus, by the above definition, 
$\sigma$ would be a convex boundary.  For the converse, suppose 
$\theta_k > 0$.  One can then choose $\Sigma$ such that 
$K_{\Sigma}(\sigma) > 0$ by taking $\beta \ll \alpha$.  Thus, 
$\sigma$ would not be convex on $\Sigma$, and hence $\sigma$ would 
not be a convex boundary.  The same argument applies if $\theta_l < 0$.
\end{proof}

\begin{fact}
A leaf of a holographic screen is a convex boundary.  The boundary of any 
entanglement wedge is also a convex boundary.
\end{fact}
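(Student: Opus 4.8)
The plan is to obtain the Fact as an immediate corollary of the theorem just proved, which characterizes a convex boundary by the two pointwise inequalities $\theta_k \le 0$ and $\theta_{-l} \le 0$ on the codimension-2 surface (equivalently $\theta_k \le 0$ and $\theta_l \ge 0$). The whole task thus reduces to verifying these inequalities in the two asserted cases, with essentially no computation.

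First I would treat a leaf $\sigma$ of a holographic screen, which by definition satisfies $\theta_k = 0$ and $\theta_l > 0$ \cite{Bousso:2015mqa} (more generally, the leaves considered in Section~\ref{subsec:def} have $\theta_k \le 0$ and $\theta_l > 0$). Then $\theta_k \le 0$ holds and $\theta_{-l} = -\theta_l < 0$, so both inward null expansions are non-positive and the theorem gives directly that $\sigma$ is a convex boundary. The reversed convention $\theta_k \ge 0$, $\theta_l < 0$ noted in the footnote of Section~\ref{subsec:def} is handled by relabeling $k \leftrightarrow -l$.

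Next I would treat the boundary of an entanglement wedge. Recall that $\text{EW}(A)$ is the domain of dependence of a compact achronal set $\Xi$ with $\partial \Xi = A \cup \gamma(A)$, so the codimension-2 boundary of $\text{EW}(A)$---the edge of the wedge, i.e.\ the boundary of any Cauchy slice of $\text{EW}(A)$---is $A \cup \gamma(A)$. I would verify convexity piece by piece. On the HRRT piece $\gamma(A)$, extremality means $\theta_k = \theta_l = 0$ identically, so the criterion holds trivially there and the choice of ``inward'' orientation is immaterial. On the piece $A \subset \sigma$, the direction pointing into $\text{EW}(A)$ is precisely the bulk-pointing direction along which $\sigma$ was just shown to be a convex boundary, so this piece inherits $\theta_k \le 0$, $\theta_l \ge 0$ from the first part. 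Hence every inward null expansion along the edge of $\text{EW}(A)$ is non-positive, and the theorem applies.

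Since the Fact is a corollary there is no genuine obstacle; the one point requiring care is making the codimension-2 claim precise at the seam where $A$ meets $\gamma(A)$, where the surface is only piecewise smooth. Strictly one should invoke the standard barrier argument that a piecewise-smooth surface which is $\theta \le 0$ on each smooth piece and whose corner bends toward the interior still blocks minimal and extremal surfaces, as used in Ref.~\cite{Engelhardt:2013tra}. Two remarks I would append. First, because the theorem already quantifies over all spacelike slices $\Sigma \supset \sigma$, convexity on every such slice---which is what the definition of ``convex boundary'' demands---is automatic once the null-expansion inequalities are in hand. Second, it is the extremality of $\gamma(A)$ ($\theta_k = \theta_l = 0$) that, together with the null energy condition and focusing, upgrades the edge of $\text{EW}(A)$ to a two-sided extremal-surface barrier---the geometric input relied on throughout the holographic-slice construction of Section~\ref{subsec:def}.
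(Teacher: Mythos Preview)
The paper states this as a \emph{Fact} without proof, treating it as an immediate consequence of the preceding theorem (the null-expansion criterion $\theta_k \le 0$, $\theta_{-l} \le 0$ for a convex boundary). Your derivation---checking the marginal condition $\theta_k = 0$, $\theta_l > 0$ on a leaf, and decomposing the edge of $\text{EW}(A)$ into the extremal piece $\gamma(A)$ (where both expansions vanish) and the leaf piece $A$ (which inherits the signs from $\sigma$)---is precisely the intended justification, and your remark about the corner at $\partial A$ is the only subtlety worth flagging.
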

\begin{theorem}
The intersection of the interior domains of dependence of two convex 
boundaries $\sigma_1$ and $\sigma_2$, represented by $D_1$ and $D_2$, 
is the interior domain of dependence of a convex boundary $\sigma'$.
\end{theorem}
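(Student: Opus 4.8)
The plan is to first pin down what $\sigma'$ is and then apply the characterisation of convex boundaries proved immediately above. By the theorem of Appendix~\ref{app:domains}, $D := D_1 \cap D_2$ is itself the domain of dependence of its future boundary; letting $\sigma'$ denote the codimension-2 edge of $D$, this means $D = D(\sigma')$, i.e.\ $D$ is the interior domain of dependence of $\sigma'$ (if $D = \emptyset$ there is nothing to prove). So the entire content of the statement reduces to showing that $\sigma'$ is a convex boundary, and by the preceding theorem this is equivalent to showing that the inward null expansions $\theta_{k'}$ and $\theta_{-l'}$ of $\sigma'$ are both non-positive, where $k',l'$ are the future-directed null normals of $\sigma'$ with $k'$ ingoing. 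I would verify this pointwise on $\sigma'$.

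Fix $p \in \sigma'$. A boundary point of $D_1 \cap D_2$ is a boundary point of $D_1$ or of $D_2$, and a future-boundary point of the intersection is a future-boundary point of that factor; hence, away from a lower-dimensional set of creases and caustics, the future boundary of $D$ near $p$ is a piece of $H^+(D_i)$ for $i=1$ or $2$, and the past boundary a piece of $H^-(D_j)$ for $j=1$ or $2$. Now $H^+(D_i)$ is generated by future-directed null geodesics leaving $\sigma_i$ along its ingoing null normal, and the expansion of this generator congruence at $\sigma_i$ is exactly $\theta_k$ of $\sigma_i$, which is $\le 0$ because $\sigma_i$ is convex. By the Raychaudhuri equation and the null energy condition this expansion is non-increasing to the future, hence $\le 0$ all along $H^+(D_i)$, in particular on the cut through $p$; but the expansion of that cut, viewed as a codimension-2 surface, along its generator direction is precisely the null expansion of $\sigma'$ along $k'$ at $p$. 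Thus $\theta_{k'}\le 0$ at $p$. The past boundary is handled symmetrically: $H^-(D_j)$ is generated by future-directed null geodesics whose future endpoints lie on $\sigma_j$ along its outgoing null normal, where the congruence expansion equals $\theta_l$ of $\sigma_j$, which is $\ge 0$ since $\sigma_j$ is convex ($\theta_{-l}\le 0$); being non-increasing to the future, this expansion is $\ge 0$ at points of $H^-(D_j)$ lying to the past of $\sigma_j$, in particular at $p$, so the expansion of $\sigma'$ along $l'$ at $p$ is $\ge 0$, i.e.\ $\theta_{-l'}\le 0$ at $p$. Both inward expansions of $\sigma'$ are therefore non-positive, so $\sigma'$ is a convex boundary and $D = D(\sigma')$ is its interior domain of dependence.

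\textbf{Where the work is.} The main obstacle is the non-smooth locus of $\partial D$: the creases where $\partial^+ D_1$ meets $\partial^+ D_2$ (and the analogue to the past) and the caustics of the Cauchy-horizon congruences, together with the corresponding kinks of $\sigma'$. There the clean Raychaudhuri statement must be replaced by a distributional one---new generators only ever enter a Cauchy horizon, and an inward kink of $\sigma'$ contributes non-positively to its extrinsic curvature, exactly the phenomenon already accommodated in Lemma~\ref{lem:3}---so the inequalities survive, but making this rigorous (together with the standard facts that $\partial^{\pm}(D_1\cap D_2)$ decomposes into pieces of $\partial^{\pm}D_1$ and $\partial^{\pm}D_2$, and that the resulting $\sigma'$ is compact) is the nontrivial part. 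An alternative route bypasses the expansion computation entirely and argues straight from the definition of convexity: a minimal surface anchored to a subregion of the interior of $\sigma'$ on any spacelike slice cannot leave $\overline{D_1}$ or $\overline{D_2}$, because $\partial D(\sigma_i)$ is an extremal-surface barrier for surfaces anchored within it (a convex $\sigma_i$ having the requisite null expansions), hence it stays in $\overline{D_1}\cap\overline{D_2} = \overline{D}$ and therefore inside the interior region of $\sigma'$ on that slice.
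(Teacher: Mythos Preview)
Your main argument is correct but takes a genuinely different route from the paper. You invoke the characterisation of convex boundaries via the signs of the null expansions and then verify $\theta_{k'}\le 0$, $\theta_{-l'}\le 0$ pointwise on $\sigma'$ by propagating the expansions along the null generators of $\partial^\pm D_i$ using Raychaudhuri and the null energy condition. The paper, by contrast, works directly from the \emph{definition} of a convex boundary (in fact the approach you flag as an ``alternative'' at the end): given any spacelike slice $\Sigma'$ through $\sigma'$, it extends $\Sigma'$ to spacelike slices $\Sigma_1$ and $\Sigma_2$ passing through $\sigma_1$ and $\sigma_2$ respectively, identical in the interior of $\sigma'$ and disjoint outside; convexity of $\sigma_i$ on $\Sigma_i$ then forces any minimal surface anchored inside $\sigma'$ to stay inside $\sigma_1$ and inside $\sigma_2$, hence inside $\sigma'$.

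What each buys: your approach is local and makes the mechanism transparent---the expansions on the Cauchy horizons only focus further---but it imports the NEC and, as you correctly identify, must be upgraded to a distributional statement at creases and caustics of $\partial D$ and at kinks of $\sigma'$. The paper's approach avoids Raychaudhuri entirely and needs no energy condition for this step; its cost is the (unstated) construction of the extended slices $\Sigma_1,\Sigma_2$ and the observation that containment in both interiors forces containment in the intersection. Either line is acceptable at the level of rigor of the paper; your ``alternative'' paragraph is essentially the paper's proof.
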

\begin{proof}
As shown in Appendix~\ref{app:domains}, $D' = D_1 \cap D_2$ is the interior 
domain of dependence of some $\sigma'$.  We thus only need to show that 
$\sigma'$ is convex.  In order to do this, we can consider a sufficiently 
small \mbox{codimension-2} region $A \subset D'$. There is then an 
extremal surface $\gamma(A)$ anchored to $A$ which stays within $D'$. 
Let us now deform $A$ continuously such that it stays within $D'$. By 
the convexity of $\sigma_1$, the surface $\gamma(A)$ does not go outside 
$D_1$. Similarly, it does not go outside $D_2$ either, by the convexity 
of $\sigma_2$. This implies that $\gamma(A)$ does not go outside $D'$, 
which in turn implies that $\sigma'$ is convex by Lemma~\ref{lem:3}.
\end{proof}
\begin{corollary}
In a black hole spacetime or the case of a spacelike screen in an FRW 
spacetime, the coarse-graining procedure moves away from the singularity 
in the direction where the expansions $\theta_k$ and $\theta_l$ have 
opposite signs.  At each step of coarse-graining, $\theta_k$ and $\theta_l$ 
in general have opposite signs.
\end{corollary}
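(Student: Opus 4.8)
The plan is to obtain the corollary from the convexity theorem just established together with the explicit form of the evolution vector $s$. First I would note that the interior domain of dependence of any renormalized leaf is, by Eq.~(\ref{eq:domain1}), an intersection of interior domains of dependence of entanglement-wedge boundaries; since those are convex boundaries by the Fact above, the preceding theorem implies that each renormalized leaf $\sigma(\lambda)$ is itself a convex boundary, and iterating this holds at every step of the coarse-graining. Then I would invoke the characterization theorem of this appendix to conclude that $\theta_k \le 0$ and $\theta_l \ge 0$ on every renormalized leaf.

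The next step is to argue that equality in either inequality is non-generic: $\theta_l = 0$ signals arrival at a bifurcation surface or a non-expanding horizon, where the flow has already terminated, while $\theta_k = 0$ occurs on the original leaf of a past holographic screen or on measure-zero loci. Hence at a generic step $\theta_k < 0 < \theta_l$, i.e.\ the two expansions have opposite signs, which establishes the second sentence of the corollary.

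For the first sentence I would use $s = \tfrac{1}{2}(\theta_k l + \theta_l k)$ from Eq.~(\ref{eq:s-vec-2}) and compute $s\cdot s = -\theta_k\theta_l$ with the normalization $k\cdot l = -2$, consistent with Eq.~(\ref{eq:theta_s-neg}). When $\theta_k$ and $\theta_l$ have opposite signs, $s$ is spacelike and points strictly into $D(\sigma(\lambda))$, so each step moves the leaf spatially inward, into the region with $\theta_k\theta_l < 0$ rather than into the adjacent trapped ($\theta_k,\theta_l<0$) or anti-trapped ($\theta_k,\theta_l>0$) regions that border the singularity. To promote ``spatially inward'' to ``away from the singularity'' I would, for black hole spacetimes, invoke that the event horizon is an extremal-surface barrier for surfaces anchored outside it (Ref.~\cite{Nomura:2017npr}), so the renormalized leaves never cross it and the inward spacelike flow moves away from both the horizon and the singularity behind it; and for the FRW spacelike screen I would appeal to the closed-form slice Eq.~(\ref{eq:FRW-slice}) together with Eq.~(\ref{eq:FRWscreen}), which shows that the first renormalization step drives $\theta_k$ strictly negative while keeping $\theta_l>0$ and simultaneously increases $\eta$ away from the big-bang singularity at $\eta=0$.

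The main obstacle is this last step: making ``away from the singularity'' precise without a global causal-structure analysis. My resolution would be to not attempt a spacetime-independent statement but to lean on the horizon's extremal-surface-barrier property in the black hole case and on the explicit slice in the FRW case, and to stress---as the corollary's phrasing ``in general'' already does---that the claim is the generic one, with the degenerate steps where $\theta_k$ or $\theta_l$ vanishes being precisely the initial screen leaf and the terminal bifurcation/horizon configurations treated elsewhere in the paper.
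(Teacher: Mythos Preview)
The paper states this corollary without proof; it is meant to follow immediately from the two theorems just above it in Appendix~\ref{app:convexity}. Your overall strategy is exactly the intended one: the intersection theorem shows that each renormalized leaf inherits convex-boundary status, and the characterization theorem then gives $\theta_k \le 0$, $\theta_l \ge 0$ at every step, which is the content of the second sentence. Using the explicit form of $s$ to argue that the flow stays in the normal (opposite-sign) region rather than the trapped or anti-trapped region is also the right way to read ``away from the singularity.''

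There is one genuine slip in your black hole paragraph. You write that ``the inward spacelike flow moves away from both the horizon and the singularity behind it.'' This is not correct and in fact contradicts Section~\ref{subsec:BH}: the holographic slice moves radially \emph{toward} the horizon and asymptotes to the bifurcation surface (or hugs the horizon in the one-sided case). What the horizon's barrier property buys you is not that the flow recedes from the horizon, but that the renormalized leaves never cross into the trapped region $\theta_k<0,\ \theta_l<0$ where the singularity lives. The phrase ``away from the singularity'' in the corollary should be read as ``into the untrapped side of the normal bundle,'' i.e.\ into $D(\sigma(\lambda))$ where the expansions keep opposite signs, not as a statement about increasing coordinate distance from the horizon. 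Once you replace ``away from the horizon'' with ``never into the trapped region,'' your argument goes through and matches what the paper leaves implicit.
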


\end{document}